  \numberwithin{equation}{section}
\numberwithin{equation}{section}
\theoremstyle{plain}
\newtheorem{theorem}{Theorem}[section]
\newtheorem{corollary}[theorem]{Corollary}
\newtheorem{lemma}[theorem]{Lemma}
\newcommand{\BALD}{\begin{aligned}}
\newcommand{\EALD}{\end{aligned}}
\newcommand{\BALDS}{\begin{aligned*}}
\newcommand{\EALDS}{\end{aligned*}}
\newcommand{\BCAS}{\begin{cases}}
\newcommand{\ECAS}{\end{cases}}
\newcommand{\BEAS}{\begin{eqnarray*}}
\newcommand{\EEAS}{\end{eqnarray*}}
\newcommand{\BEQ}{\begin{equation}}
\newcommand{\EEQ}{\end{equation}}
\newcommand{\BIT}{\begin{itemize}}
\newcommand{\EIT}{\end{itemize}}
\newcommand{\BMAT}{\begin{bmatrix}}
\newcommand{\EMAT}{\end{bmatrix}}
\newcommand{\BNUM}{\begin{enumerate}}
\newcommand{\ENUM}{\end{enumerate}}
\newcommand{\eg}{e.g.}
\newcommand{\ie}{i.e.}
\newcommand{\BA}{\begin{array}}
\newcommand{\EA}{\end{array}}
\newcommand{\ones}{\mathbf 1}
\newcommand{\reals}{\mathbf{R}}
\DeclareMathOperator*{\argmax}{\arg\max}
\DeclareMathOperator*{\argmin}{\arg\min}
\DeclareMathOperator{\Expect}{\mathbf{E}}
\DeclareMathOperator{\Prob}{\mathbf{Pr}}
\DeclareMathOperator{\var}{var}
\newcommand{\pc}{\hspace{1pc}}
\newcommand{\norm}[1]{\left\| #1 \right\|}
\DeclareMathOperator{\unif}{unif}
\newcommand{\mubar}{\bar{\mu}}
\newcommand{\cK}{\mathcal{K}}
\newcommand{\cN}{\mathcal{N}}
\newcommand{\n}{\bar{n}}
\newcommand{\Sigmabar}{\bar{\Sigma}}
\newcommand{\V}{\mathcal{V}}
\newcommand{\X}{\skew{3}{\bar}{X}}
\newcommand{\y}{\bar{y}}
\begin{document}

\begin{frontmatter}
\title{Valid post-correction inference for censored regression problems}
\runtitle{Valid post-correction inference}

\begin{aug}
\author{\fnms{Yuekai} \snm{Sun}\ead[label=e1]{yuekai@stanford.edu}}
\and
\author{\fnms{Jonathan E.} \snm{Taylor}\ead[label=e2]{jonathan.taylor@stanford.edu}}

\runauthor{Sun and Taylor}

\affiliation{Stanford University}

\address{Institute for Computational and Mathematical Engineering \\
Stanford University \\
475 Via Ortega, Stanford, CA 94305 \\
\printead{e1}}
\address{}

\address{Department of Statistics \\
Stanford University \\
390 Serra Mall, Stanford, CA 94305 \\
\printead{e2}}
\address{}
\end{aug}

\begin{abstract}
Two-step estimators often called upon to fit censored regression models in many areas of science and engineering. Since censoring incurs a bias in the naive least-squares fit, a two-step estimator first estimates the bias and then fits a corrected linear model. We develop a framework for performing valid \emph{post-correction inference} with two-step estimators. By exploiting recent results on post-selection inference, we obtain valid confidence intervals and significance tests for the fitted coefficients. 
\end{abstract}

\begin{keyword}[class=MSC]
\kwd[Primary ]{62N01}
\kwd[; secondary ]{	62P20 }
\end{keyword}

\begin{keyword}
\kwd{censoring}
\kwd{sample selection}
\kwd{Tobit model}
\kwd{two-step estimator}
\kwd{post-correction inference}
\kwd{post-selection inference}
\end{keyword}

\end{frontmatter}

\section{Introduction}
\label{sec:intro}

Censored regression models were developed to handle otherwise ordinary statistical problems in which the data is subject to censoring. Formally, censored models are ``two-part'' models that adds to a linear model, \eg{}
\[
y = X\beta + \epsilon,\,\epsilon\sim\cN(0,\sigma^2 I),
\]
a ``censoring'' or ``sample selection'' step prior to observation $y = g(y^*)$. Perhaps, the most common example of a censored regression model is the standard (Type I) Tobit model \citep{amemiya1985advanced}:
\BEQ
y = \max\{0,y^*\},\,y^*\sim\cN(X\beta,\sigma^2 I).
\label{eq:tobit-model}
\EEQ
The step $\max\{0,y^*\}$ may be changed to $\max\{y_0,y^*\}$ without essentially changing the model by absorbing $y_0$ into the intercept term. Applications of this model are found throughout econometrics and the social sciences. Two examples are family income after receiving welfare \citep{keeley1978estimation} and the criminal behavior of habitual offenders \citep{witte1980estimating}. 

Most previous work on censored regression models focuses on estimation. Two competing approaches are the maximum likelihood estimator (MLE) and various two-step estimators. Under appropriate regularity conditions, both approaches produce asymptotically normal estimates. To perform inference, we call upon the standard battery of likelihood ratio, score/Lagrange multiplier, and Wald tests/intervals. Non-asymptotic inference is hard because the censoring mechanism incurs a bias in the usual least-squares estimate for $\beta$ that must be corrected for. Two-step estimators account for the censoring bias by first estimating a correction term and then fitting a (linear) regression model with the correction term. Since the correction term is random, the usual confidence intervals/significance tests for the regression coefficients fail to have the nominal coverage/error rates.

We focus on performing inference with two-step estimators conditioned on the outcome of the censoring mechanism. Since the correction term is completely determined by the outcome of the censoring mechanism, we call this form of inference \emph{post-correction inference}. Although we must condition on the censoring event to perform inference, the results are valid \emph{unconditionally}. We discuss the unconditional validity of Section \ref{sec:summary}.

\subsection{Background}

The first censored regression model was proposed by \cite{tobin1958estimation} to model household expenditure on durable goods. The model explicitly accounts for the fact that expenditure cannot be negative. Tobin called his model the model of \emph{limited dependent variables}. It and its generalizations are commonly known in econometrics as \emph{Tobit models}, a play on \emph{probit models} by \cite{goldberger1964econometric}. 
Censored regression models are also broadly applicable in many other areas of science and engineering. For example, the survival times of patients and the time to failure of a machine or a system are both censored by the length of the study. In their respective domains, these models are called \emph{survival} or \emph{duration} models. 

Since the 1970s, many generalizations of Tobin's original model has appeared and continue to appear in econometrics and the aforementioned areas. \cite{amemiya1985advanced} classified these models into five basic types according to the form of the likelihood function. We shall refer to various censored regression models according to Amemiya's classification. 


Fitting a censored regression model is usually performed by M-estimation. To fit simple models, the maximum likelihood estimator (MLE) \citep{amemiya1973regression} (usually after a change of variable proposed by \cite{olsen1978note}) or Powell's least absolute deviations (LAD) estimator \citep{powell1984least} are preferred. The EM algorithm by \cite{dempster1977maximum} is also suitable for fitting censored regression models. For more complex models, two-step estimators \citep{heckman1976common} are more computationally efficient. Heckman originally proposed his two-step estimator for a Type 3 Tobit model, but readily adapts to other censored regression models. \cite{puhani2000heckman} compared two-step estimators with the MLE and concluded that two-step estimators are more robust when the covariates in the model are highly correlated. Otherwise, the MLE is more (statistically) efficient. 

Subject to appropriate regularity conditions, all the aforementioned estimators are asymptotically normal, thus inference is usually performed asymptotically. \cite{amemiya1973regression} showed the maximum likelihood estimate (MLE) is consistent and asymptotically normal (subject to homeoscedasticity and normality). The same is generally true of two-step estimators. \cite{powell1984least} showed his LAD estimator is consistent and asymptotically normal under more general conditions. This makes Powell's estimator especially attractive when the data is nonnormal or the errors are heteroscedastic. The main drawback to Powell's estimator is the computational expense of optimizing a nonsmooth function.

\subsection{Related work on post-selection inference}

This work is based on a framework for post-(model)-selection inference by \cite{lee2013post}. At the core of both frameworks are key distributional results on the supremum of a Gaussian process restricted to a convex set $\cK\subset\reals^p$ :
\[
\sup_{\eta\in\cK}\eta^T\epsilon,\,\epsilon\sim\cN(\mu,\Sigma).
\]
In this paper, we give a geometric derivation of a special case (when the set $\cK$ is polyhedral) of the main result (Theorem 1) in \cite{taylor2013tests} and refer to the source for a more general derivation. Like in \cite{taylor2013tests} and \cite{lee2013post}, normality is crucial to our results.

\subsection{Notation}

We follow the notation of \cite{amemiya1985advanced}. We are given $X\in\reals^{n\times p}$ and responses $y\in\reals^n$. We say $x_i^T$ for the $i$-th row of $X$ and $y_i$ for the $i$-th response. Often we need to distinguish between the vectors and matrices of censored and all observations; the former appear with an underbar, \eg{} $\y$ is the vector of uncensored responses. Usually, the uncensored responses are constrained to fall in the non-negative orthant $\reals_+^n$. Finally, $\phi(x)$ and $\Phi(x)$ are the pdf and CDF of the standard normal distribution.

\section{Post-correction inference with two-step estimators}
\label{sec:post-correction-inference}

\subsection{Two two-step estimators}

We describe two-step estimators in the context of the standard Tobit model. Since
\[
\Expect[y_i\mid y_i > 0] = x_i^T\beta + \Expect[\epsilon_i\mid\epsilon_i > x_i^T\beta]
\]
and the conditional expectation on the right side is usually nonzero (even when $\epsilon_i$ is not normal), performing (linear) regression on only the samples with positive responses produces biased estimates. When $\epsilon$ is normal, the bias term simplifies to
\BEQ\textstyle
\Expect[y_i\mid y_i > 0] = x_i^T\beta + \sigma\lambda\left(\frac1\sigma x_i^T\beta\right),
\label{eq:type-1-tobit-conditional-expect}
\EEQ
where $\lambda(x)$ is the inverse Mills ratio $\frac{\phi(x)}{1-\Phi(x)}$. A two-step estimator corrects the bias by first estimating the inverse Mills ratio $\lambda\left(\frac1\sigma x^T\beta\right)$ with probit regression, and then fitting a corrected linear model to the samples with positive responses. The corrected linear model is heteroscedastic:
\BEQ\textstyle
\var[\epsilon_i\mid\epsilon_i > x_i^T\beta] = \sigma^2 - \sigma^2x_i^T\alpha\lambda\left(\frac1\sigma x_i^T\beta\right) - \sigma^2\lambda\left(\frac1\sigma x_i^T\beta\right)^2,
\label{eq:type-1-tobit-conditional-variance}
\EEQ
so we can estimate $\beta$ more efficiently with weighted least squares in the second step. Since our focus is not estimation, we skip this topic and refer to \cite{amemiya1985advanced} for details.

\begin{algorithm}
\caption{Two-step estimator (Type 1 Tobit model)} \label{alg:two-step-estimator-standard-tobit}
\begin{algorithmic}[1]
\Require design matrix $X$, responses $y$
\State Estimate $\alpha = \frac1\sigma \beta$ with the probit MLE:
\Statex \pc\pc $\hat{\alpha} = \argmax_\alpha\,\prod_{i:y_i = 0}(1 - \Phi(x_i^T\alpha))\prod_{i:y_i > 0}\Phi(x_i^T\alpha)$. 
\State Regress the uncensored responses $\y$ on $\X$ and $\hat{\lambda} = \lambda(\X\hat{\alpha})$ to estimate $\beta$ and $\sigma$:
\Statex \pc\pc $\hat{\beta},\hat{\sigma} = \argmin_{\beta,\sigma}\frac{1}{2}\|\y - \sigma\hat{\lambda}- \X\beta\|_2^2$.
\end{algorithmic}
\end{algorithm}

\cite{heckman1979sample}, who proposed the first two-step estimator, interprets the first step in the two-step estimator can be interpreted as obtaining a correction for the missing bias term in \eqref{eq:type-1-tobit-conditional-expect}. As long as the probit MLE is consistent, the two-step estimator is also consistent. Further, when 
\BNUM
\item $\{x_i^T\}$ are uniformly bounded,
\item $\lim_{n\to\infty}X^TX/n$ (exists and) is positive definite,
\item $(\beta,\sigma)$ are in some (\emph{a priori known}) compact subset of $\reals^p$,
\ENUM
$\sqrt{n}(\hat{\beta} - \beta)$ is asymptotically normal \citep{amemiya1985advanced} and the usual battery of likelihood ratio, score, and Wald tests/intervals are valid. The two-step approach also generalizes readily to all five types of Tobit models in \cite{amemiya1985advanced}. We refer to \cite{amemiya1985advanced} for details.

In the second step, one can also regress all observations (including the censored observations) to obtain an estimate of $\beta$. This estimator is also consistent and asymptotically normal. Which estimator is more efficient? Unfortunately, the answer depends on the parameter $\beta$. However, the framework also applies to this estimator by considering a \emph{degenerate} constrained normal.

\subsection{A pivotal quantity for constrained normal variables}

Recall our goal is to perform inference on the fitted coefficients conditioned on the censoring event $\y > 0$. More generally, we are interested in targets of the form $\eta^T\y$ for some $\eta$. We begin by characterizing the conditional distribution of the response. Firstly, given $\y > 0$, $\y$ has a constrained normal distribution; \ie{} the conditional pdf of $\y$ is $f(\y\mid\y > 0) \propto \phi(\y)\ones_{\reals_+^{\n}}(\y)$. Secondly, given the censoring event, the correction term $\hat{\lambda}$ is simply
\[
\hat{\lambda} = \lambda(\X\hat{\alpha}),\,\hat{\alpha} = \argmax_\alpha\,\prod_{i\in S^c}(1 - \Phi(x_i^T\alpha))\prod_{i\in S}\Phi(x_i^T\alpha).
\]
The two-step estimate of $\gamma = (\beta,\sigma)$ is given by 
\[
\hat{\gamma} = \hat{Z}^\dagger\y = (\hat{Z}^T\hat{Z})^{-1}\hat{Z}^T\y,\,\hat{Z} = \BMAT \X & \hat{\lambda} \EMAT.
\]
To form (conditional) confidence intervals for or (conditionally) test the significance of the coefficients $\hat{\gamma}_j,\,j=1,\dots,p$, we study its distribution conditioned on the censoring event. In this section, we exploit distributional results in \cite{lee2013post} to derive a pivotal quantity.

\begin{theorem}
\label{thm:truncated-normal}
Let $y\sim\cN(\mu,\Sigma)$. Define $a = \frac{A\Sigma\eta}{\eta^T\Sigma\eta}$ and
\begin{align}
\V_\eta^+(y) &= \sup_{j:a_j < 0}\frac{1}{a_j}(b_j - (Ay)_j + a_j\eta^Ty) \label{eq:V-plus} \\
\V_\eta^-(y) &= \inf_{j:a_j > 0}\frac{1}{a_j}(b_j - (Ay)_j + a_j\eta^Ty) \label{eq:V-minus} \\
\V_\eta^0(y) &= \inf_{j:a_j = 0} b_j - (Ay)_j.
\label{eq:V-zero}
\end{align}
Then, $\eta^Ty$ conditioned on $Ay \le b$ and $(\V_\eta^+(y), \V_\eta^-(y),\V_\eta^0(y)) = (v^+,v^-,v^0)$, has a truncated normal distribution, \ie{}
\[\textstyle
\eta^Ty\mid Ay \le b,\V_\eta^+(y), \V_\eta^-(y),\V_\eta^0(y) \sim TN(\eta^T\mu,\eta^T\Sigma\eta,v^-,v^+).
\]
\end{theorem}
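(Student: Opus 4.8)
The plan is to follow the geometric strategy behind the polyhedral lemma of \cite{lee2013post}: decompose $y$ into the one-dimensional piece that carries the target $\eta^Ty$ and a residual that is independent of it, then show that the polyhedral constraint $Ay\le b$ is equivalent to a simple interval constraint on $\eta^Ty$ whose endpoints are functions of the residual alone.

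First I would introduce the decomposition
\[
y = \frac{\Sigma\eta}{\eta^T\Sigma\eta}\,\eta^Ty + z,\qquad z := y - \frac{\Sigma\eta}{\eta^T\Sigma\eta}\,\eta^Ty,
\]
and compute $\cov(\eta^Ty,z) = \Sigma\eta - \frac{\Sigma\eta}{\eta^T\Sigma\eta}\,\eta^T\Sigma\eta = 0$. Since $(\eta^Ty,z)$ is jointly Gaussian, vanishing covariance upgrades to independence, so $\eta^Ty\sim\cN(\eta^T\mu,\eta^T\Sigma\eta)$ is independent of $z$. This independence is the engine of the whole argument.

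Next I would rewrite the constraint in these coordinates. Because $A\frac{\Sigma\eta}{\eta^T\Sigma\eta} = a$ by definition of $a$, we have $Ay = a\,\eta^Ty + Az$, so the $j$-th constraint $(Ay)_j\le b_j$ reads $a_j\,\eta^Ty \le b_j - (Az)_j$. The linchpin is the identity
\[
b_j - (Az)_j = b_j - (Ay)_j + a_j\,\eta^Ty,
\]
which is exactly the expression inside $\V_\eta^{\pm}$ and $\V_\eta^0$, and which shows that these quantities are measurable functions of $z$ only. Splitting on the sign of $a_j$: when $a_j>0$ the constraint becomes $\eta^Ty \le a_j^{-1}(b_j-(Az)_j)$, whose infimum over such $j$ is $\V_\eta^-$; when $a_j<0$ it becomes $\eta^Ty \ge a_j^{-1}(b_j-(Az)_j)$, whose supremum is $\V_\eta^+$; and when $a_j=0$ it becomes $0\le b_j-(Az)_j$, free of $\eta^Ty$, with feasibility recorded by $\V_\eta^0\ge 0$. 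Hence $\{Ay\le b\}$ coincides exactly with $\{\V_\eta^+(y)\le \eta^Ty\le \V_\eta^-(y)\}\cap\{\V_\eta^0(y)\ge 0\}$.

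Finally I would assemble the conditional law. Since $\V_\eta^+,\V_\eta^-,\V_\eta^0$ depend on $y$ only through $z$, conditioning on their values $(v^+,v^-,v^0)$ is conditioning on a subevent of $z$, which by the independence of Step~1 leaves $\eta^Ty$ distributed as $\cN(\eta^T\mu,\eta^T\Sigma\eta)$; adjoining the constraint $Ay\le b$ then simply restricts $\eta^Ty$ to the interval with endpoints $v^+$ and $v^-$ (the $a_j=0$ constraints hold automatically once $v^0\ge 0$), which truncates the Gaussian and yields $TN(\eta^T\mu,\eta^T\Sigma\eta,v^-,v^+)$. The step I expect to require the most care is precisely this last reduction: verifying that the truncation endpoints are $z$-measurable rather than functions of $\eta^Ty$, so that conditioning on them is compatible with the independence established in Step~1. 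Everything hinges on the identity $b_j-(Ay)_j+a_j\eta^Ty = b_j-(Az)_j$; once that is in hand, the collapse to a one-dimensional truncated normal is immediate.
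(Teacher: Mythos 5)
Your proposal is correct and follows essentially the same route as the paper's own proof: both rewrite the polyhedral event $\{Ay\le b\}$ as an interval constraint on $\eta^Ty$ whose endpoints $b_j-(Ay)_j+a_j\eta^Ty$ have zero covariance with (hence are independent of) $\eta^Ty$, and then conclude by conditioning. The only cosmetic difference is that you make the residual $z = y - \frac{\Sigma\eta}{\eta^T\Sigma\eta}\eta^Ty$ explicit, whereas the paper subtracts $\Expect[Ay\mid\eta^Ty]$ — the same projection in different clothing; your labeling of $\V_\eta^{\pm}$ also matches the theorem statement, which the paper's appendix itself swaps inconsistently.
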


We defer a proof to Appendix \ref{sec:truncated-normal-proof} and focus on a geometric interpretation of Theorem \ref{thm:truncated-normal}. Let $C = \{y\in\reals^n\mid Ay \le b\}$. The functions $\V_\eta^+(y)$ and $\V_\eta^-(y)$ return the furthest we can move along $\tilde{\eta} = \Sigma^{1/2}\eta$  starting at $\tilde{y} = \Sigma^{-1/2}y$ while remaining in the set $\Sigma^{-1/2}C$, \ie
\begin{align*}
\V_\eta^+(y) &= \sup\{\lambda\in\reals:\lambda\tilde{\eta} + \tilde{y}\in\Sigma^{-1/2}C\} \\
\V_\eta^-(y) &= \inf\{\lambda\in\reals:\lambda\tilde{\eta} + \tilde{y}\in\Sigma^{-1/2}C\}.
\end{align*}
Since the set $\{\lambda:\lambda\tilde{\eta} + \tilde{y}\in\Sigma^{-1/2}C\}$ is equivalent to $\{\lambda:\lambda\Sigma^{1/2}\tilde{\eta} + \Sigma^{1/2}\tilde{y}\in C\}$, we have, the terms of the original variables, we have
\begin{align*}
\V_\eta^+(y) &= \sup\{\lambda:\lambda\Sigma\eta + y\in C\} \\
\V_\eta^-(y) &= \inf\{\lambda:\lambda\Sigma\eta + y\in C\}.
\end{align*}
The solution to both optimization problems are given by \eqref{eq:V-plus} and \eqref{eq:V-minus}. By conditioning on $(\V_\eta^+(y), \V_\eta^-(y), \V_\eta^0(y)) = (v^+,v^-,v_0)$, we are further restricting $y$ to fall on a slice of $C$ (the slice $\{\lambda\Sigma\eta + y\}\cap C$ parametrized by $\lambda$).  Theorem \ref{thm:truncated-normal} merely says $y$ restricted to this slice has a truncated normal distribution. This follows from the fact that $\V_\eta^+(y), \V_\eta^-(y), \V_\eta^0(y)$ are independent of $\eta^Ty$ by construction.

\begin{figure} 
\includegraphics[width=0.5\textwidth]{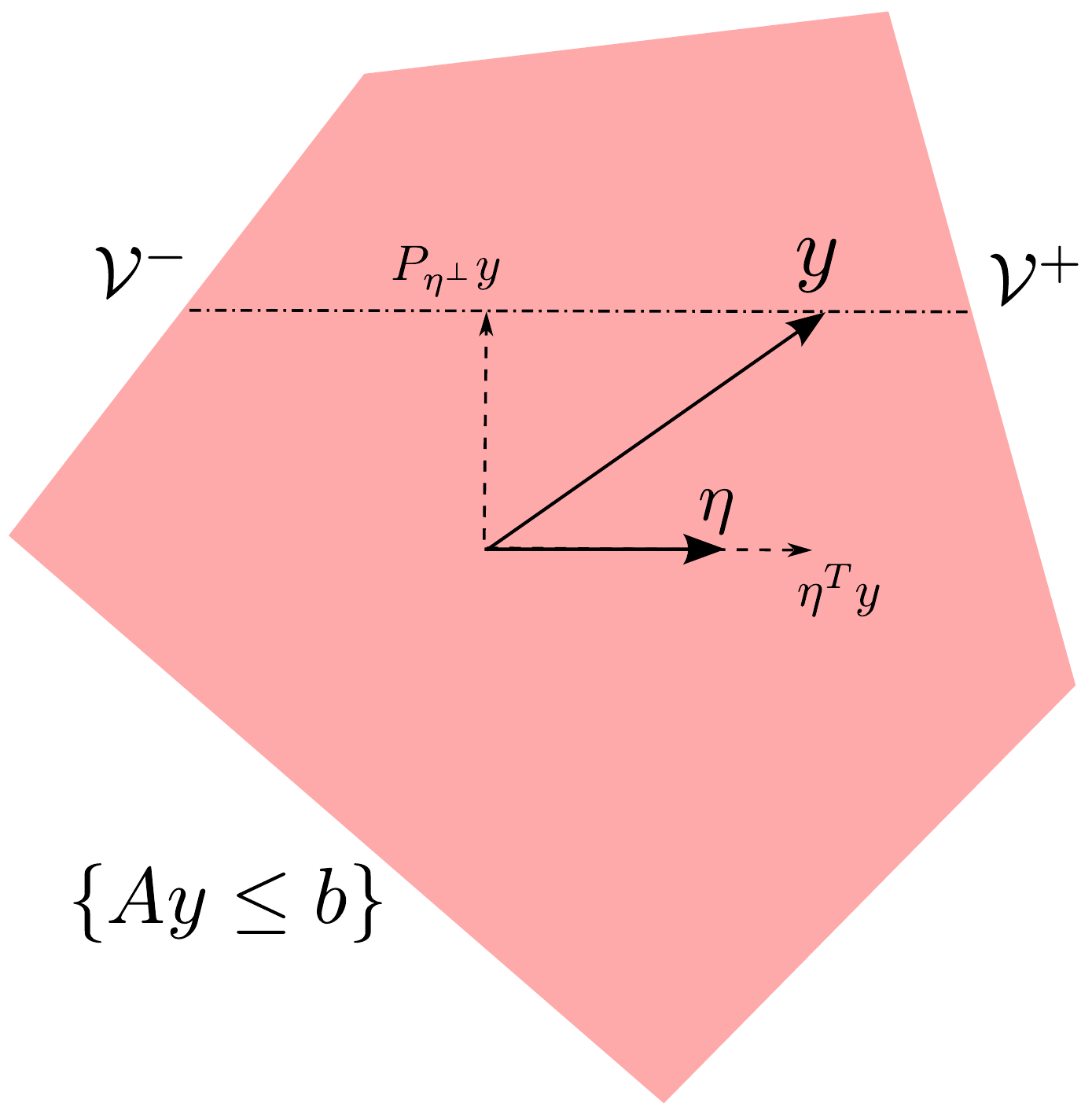}
\caption[]{Geometric intuition behind Theorem \ref{thm:truncated-normal}. To simplify the drawing, we assume $\Sigma = I$. By conditioning on $Ay \le b,\V^+(y),\V_\eta^-(y), \V^0(y) \ge 0$, we are restricting ourselves to the dotted slice of the shaded region. Thus conditioning on $\V_\eta^+(y), \V_\eta^+(y)$ is equivalent to conditioning on the component of $y$ orthogonal to $\eta$. Since the orthogonal component is independent of $\eta^Ty$, conditioning on it does not affect the normality of $\eta^Ty$.}
\label{fig:slice-orthant}
\end{figure}

Given $y\sim\cN(\mu,\Sigma)$, we restrict to the slice $\{\lambda\Sigma\eta + y\}\cap\reals_+^n$. To obtain a $\unif(0,1)$ distributed pivotal quantity, we apply a CDF transform. Figure \ref{fig:pval-cdf} shows results from a simulation that empirically confirm the pivotal quantity is uniformly distributed.

\begin{corollary}
\label{cor:pivotal-quantity}
Let $y\sim\cN(\mu,\Sigma)$. Define 
\BEQ
F(x,\mu,\sigma^2,a,b) = \frac{\Phi\left(\frac{x-\mu}{\sigma}\right) - \Phi\left(\frac{v^--\mu}{\sigma}\right)}{\Phi\left(\frac{v^+-\mu}{\sigma}\right) - \Phi\left(\frac{v^--\mu}{\sigma}\right)}.
\label{eq:probability-integral-transform}
\EEQ
Then $F(\eta^Ty,\eta^T\mu,\eta^T\Sigma\eta,\V_\eta^-(y),\V_\eta^+(y))$ is a (conditional) pivotal quantity with a $\unif(0,1)$ distribution, \ie
\BEQ
F(\eta^Ty,\eta^T\mu,\eta^T\Sigma\eta,\V_\eta^-(y),\V_\eta^+(y))\mid Ay\le b\sim\unif(0,1)
\label{eq:pivotal-quantity-1}
\EEQ
where $\V_\eta^+(y),\,\V_\eta^-(y),\,\V_\eta^0(y)$ are given by \eqref{eq:V-plus}, \eqref{eq:V-minus}, \eqref{eq:V-zero}.
\end{corollary}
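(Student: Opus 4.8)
The plan is to recognize $F$ as a truncated-normal CDF and then combine the probability integral transform with Theorem~\ref{thm:truncated-normal}. Fix $(v^-,v^+)$ with $v^-<v^+$. Inspecting \eqref{eq:probability-integral-transform}, the map $x\mapsto F(x,\mu,\sigma^2,v^-,v^+)$ is precisely the cumulative distribution function of a $TN(\mu,\sigma^2,v^-,v^+)$ random variable: it is the standard normal CDF (after centering and scaling by $\sigma$) renormalized so that it equals $0$ at $x=v^-$ and $1$ at $x=v^+$. On $[v^-,v^+]$ this function is continuous and strictly increasing, which is what the integral transform requires.

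Next I would condition on all of the relevant statistics at once. By Theorem~\ref{thm:truncated-normal}, on the event $\{Ay\le b\}$ with $(\V_\eta^+(y),\V_\eta^-(y),\V_\eta^0(y))=(v^+,v^-,v^0)$ the scalar $\eta^Ty$ has distribution $TN(\eta^T\mu,\eta^T\Sigma\eta,v^-,v^+)$. Applying the probability integral transform (a continuous random variable passed through its own CDF is $\unif(0,1)$) with the CDF identified above gives
\[
F(\eta^Ty,\eta^T\mu,\eta^T\Sigma\eta,v^-,v^+)\mid Ay\le b,\,\V_\eta^+(y)=v^+,\,\V_\eta^-(y)=v^-,\,\V_\eta^0(y)=v^0\ \sim\ \unif(0,1).
\]

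The final step, which is the only one requiring real care, is to discharge the conditioning on $(\V_\eta^+,\V_\eta^-,\V_\eta^0)$. The key observation is that the displayed conditional law is $\unif(0,1)$ for \emph{every} admissible realization $(v^+,v^-,v^0)$, so the conditional distribution does not depend on the value conditioned upon. Hence $F(\eta^Ty,\dots)$ is independent of $(\V_\eta^+,\V_\eta^-,\V_\eta^0)$, and integrating the (constant) conditional CDF against the law of these statistics given $\{Ay\le b\}$ via the tower property leaves it unchanged, which is exactly \eqref{eq:pivotal-quantity-1}. The main obstacle is thus not the marginalization itself but a technical nondegeneracy check: the integral transform needs $v^-<v^+$ almost surely so that the truncated normal is nondegenerate. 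This holds because $\eta^T\Sigma\eta>0$ and, on $\{Ay\le b\}$, the slice $\{\lambda\Sigma\eta+y\}\cap C$ has positive length with probability one (the point $y$ lies in the interior of the slice), so the degenerate case $v^-=v^+$ is a null event and may be ignored.
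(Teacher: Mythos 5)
Your proposal is correct and follows essentially the same route as the paper's own proof: invoke Theorem \ref{thm:truncated-normal} to get the truncated-normal conditional law, apply the CDF (probability integral) transform, and then marginalize over $(\V_\eta^+,\V_\eta^-,\V_\eta^0)$ using the fact that the conditional distribution is $\unif(0,1)$ regardless of their realized values. Your additional remarks---the explicit identification of $F$ as the $TN$ CDF, the tower-property justification of the marginalization, and the nondegeneracy check $v^-<v^+$ a.s.---are careful touches the paper leaves implicit, but they do not change the argument.
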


\begin{proof}
By Theorem \ref{thm:truncated-normal}, we know 
\[
\eta^Ty\mid y > 0,\V_\eta^+(y), \V_\eta^-(y),\V_\eta^0(y),
\]
where $\V_\eta^+(y), \V_\eta^-(y),\V_\eta^0(y)$ are given by \eqref{eq:V-plus}, \eqref{eq:V-minus}, \eqref{eq:V-zero}, has a truncated normal distribution. We apply the CDF transform to deduce
\[
F(\eta^Ty,\eta^T\mu,\eta^T\Sigma\eta,\V_\eta^-(y),\V_\eta^+(y))\mid y > 0,\V_\eta^+(y), \V_\eta^-(y), \V_\eta^0(y)
\]
is uniformly distributed. Since this holds for any $\V_\eta^+(y),\,\V_\eta^-(y),\,\V_\eta^0(y)$, we deduce
\[
F(\eta^Ty,\eta^T\mu,\eta^T\Sigma\eta,\V_\eta^-(y),\V_\eta^+(y))\mid y > 0.
\]
is also uniformly distributed.
\end{proof}

\begin{figure}
\centering
\includegraphics[width=0.5\textwidth]{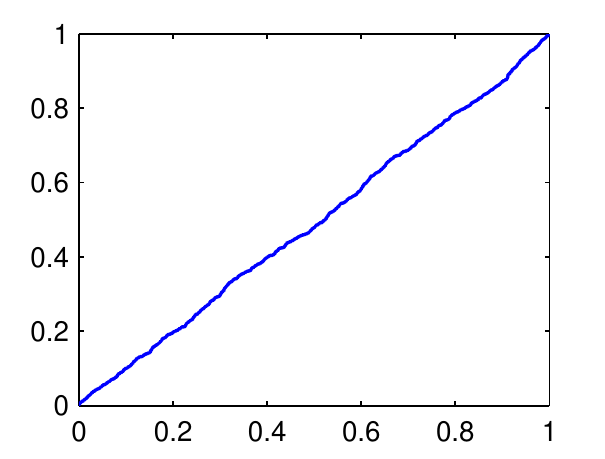}
\caption{Empirical CDF of 10000 samples of the pivotal quantity \eqref{eq:pivotal-quantity-1} obtained by rejection sampling. The empirical distribution is very close to that of a $\unif(0,1)$ variable.}
\label{fig:pval-cdf}
\end{figure}

\subsection{Confidence intervals for the fitted coefficients}

Recall we seek valid confidence intervals for $\beta_j$ \emph{conditioned} on the censoring event $\y > 0$. Let $\eta_j = (\X^\dagger)^T e_j$. 
By Corollary \ref{cor:pivotal-quantity}, we have
\[
F(\eta_j^T\y,\eta_j^T\X\beta,\sigma^2\norm{\eta_j}_2^2,\V_{\eta_j}^-(\y),\V_{\eta_j}^+(\y))\mid\y > 0\sim\unif(0,1),
\]
where $\V_\eta^+(y),\,\V_\eta^-(y),\,\V_\eta^0(y)$ are given by \eqref{eq:V-plus}, \eqref{eq:V-minus}, \eqref{eq:V-zero} with $A = -I$ and $b = 0$. The second argument to $F$ simplifies to 
\[
\eta_j^T\X\beta = e_j^T\X^\dagger\X\beta = \beta_j.
\]
To obtain valid confidence intervals for $\beta_j$, we simply ``invert'' the pivotal quantity. The set
\BEQ\textstyle
\left\{\nu\in\reals:\frac{\alpha}{2} \le F(\eta_j^T\y,\nu,\sigma^2\norm{\eta_j}_2^2,\V_{\eta_j}^-(\y),\V_{\eta_j}^+(\y)) \le 1 - \frac{\alpha}{2}\right\}
\label{eq:invert-pivot}
\EEQ
is a $1-\alpha$ confidence interval for $\beta_j$. The endpoints $\frac{\alpha}{2}$ and $1 - \frac{\alpha}{2}$ were chosen arbitrarily. By Lemma \ref{lem:F-monotone}, $F$ decreases monotonically in $\nu$. To obtain an intervals, we need to solve two univariate root-finding problem. Figure \ref{fig:corr-intervals} shows results from two simulations that compare the coverage of the corrected intervals versus the normal intervals. 

\begin{lemma}
\label{lem:intervals}
Let $\eta_j = (\X^\dagger)^T e_j$. Define $\nu_\alpha(\y)$ to be the (unique) root of 
\[
F(\eta_j^T\y,\nu,\sigma^2\norm{\eta_j}_2^2,\V_{\eta_j}^-(\y),\V_{\eta_j}^+(\y)) = \alpha.
\]
Given $\y > 0$, $[\nu_{\alpha/2}(\y),\nu_{1-\alpha/2}(\y)]$ is a valid $1-\alpha$ confidence interval for $\beta_j$:
\[
\Prob(\beta_j\in[\nu_{\alpha/2}(\y),\nu_{1-\alpha/2}(\y)]\mid\y>0) = 1 - \alpha.
\]
\end{lemma}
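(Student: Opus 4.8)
The plan is to prove coverage by the standard device of inverting the conditional pivot supplied by Corollary~\ref{cor:pivotal-quantity}. For a fixed realization of $\y$ write
\[
U(\nu) = F(\eta_j^T\y,\nu,\sigma^2\norm{\eta_j}_2^2,\V_{\eta_j}^-(\y),\V_{\eta_j}^+(\y))
\]
for the truncated-normal CDF evaluated at the observed $\eta_j^T\y$, regarded as a function of the mean argument $\nu$. Using the identity $\eta_j^T\X\beta = \beta_j$ established above, Corollary~\ref{cor:pivotal-quantity} states precisely that the random variable $U(\beta_j)$, evaluated at the true mean, is conditionally uniform,
\[
U(\beta_j)\mid\y > 0\sim\unif(0,1).
\]
This is the sole distributional input; everything else is a deterministic manipulation carried out on each realization (equivalently, conditional on the statistics $\V_{\eta_j}^-(\y),\V_{\eta_j}^+(\y),\V_{\eta_j}^0(\y)$).

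First I would check that $\nu_\alpha(\y)$ is well defined. For fixed data, $U(\cdot)$ is continuous in $\nu$ and, by Lemma~\ref{lem:F-monotone}, strictly decreasing; reading $F$ as the $TN(\nu,\sigma^2,\V_{\eta_j}^-,\V_{\eta_j}^+)$ CDF at the interior point $\eta_j^T\y$ gives the limits $U(\nu)\to 1$ as $\nu\to-\infty$ and $U(\nu)\to 0$ as $\nu\to+\infty$. Hence $U(\cdot)$ is a continuous strictly decreasing bijection from $\reals$ onto $(0,1)$, so for every $\alpha\in(0,1)$ the equation $U(\nu)=\alpha$ has a unique root $\nu_\alpha(\y)$, which is a measurable function of the data. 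Because $U$ is decreasing, $\alpha\mapsto\nu_\alpha(\y)$ is decreasing, so $\nu_{1-\alpha/2}(\y)\le\nu_{\alpha/2}(\y)$ and the reported confidence set is the closed interval bounded by these two roots.

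The core step translates the coverage event into an event about the pivot. Applying the strictly decreasing map $U$ to both sides of each inequality yields, for every realization,
\[
\beta_j\ge\nu_{1-\alpha/2}(\y)\iff U(\beta_j)\le U(\nu_{1-\alpha/2}(\y)) = 1-\tfrac{\alpha}{2},
\]
\[
\beta_j\le\nu_{\alpha/2}(\y)\iff U(\beta_j)\ge U(\nu_{\alpha/2}(\y)) = \tfrac{\alpha}{2}.
\]
Thus the event that $\beta_j$ falls between the two roots coincides exactly with $\{\tfrac{\alpha}{2}\le U(\beta_j)\le 1-\tfrac{\alpha}{2}\}$. Conditioning on $\y > 0$ and invoking the uniformity of $U(\beta_j)$ then gives
\[
\Prob\!\left(\tfrac{\alpha}{2}\le U(\beta_j)\le 1-\tfrac{\alpha}{2}\,\middle|\,\y > 0\right) = \left(1-\tfrac{\alpha}{2}\right) - \tfrac{\alpha}{2} = 1-\alpha,
\]
which is the claimed coverage.

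I expect the only real obstacle to be bookkeeping around monotonicity. One must use that $F$ is \emph{strictly} (not merely weakly) decreasing in its mean argument over the relevant range, so that each inversion above is an exact biconditional rather than a one-sided inclusion, and one must keep the orientation of the endpoints straight, since the decreasing pivot reverses the order of the quantiles relative to $\alpha$. Establishing the limiting values of $U(\nu)$ is slightly delicate because the raw ratio in \eqref{eq:probability-integral-transform} is a $0/0$ indeterminate form, but it follows cleanly from the truncated-normal interpretation (the mass concentrates at the endpoint $\V_{\eta_j}^-$ or $\V_{\eta_j}^+$ as $\nu\to\mp\infty$). Granting Lemma~\ref{lem:F-monotone}, the remaining steps are routine, and no measurability or degenerate-conditioning issue arises because the statement is made conditional on $\y>0$, an event of positive probability.
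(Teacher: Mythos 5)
Your proof is correct and follows essentially the same route as the paper, which establishes the lemma implicitly by inverting the pivot of Corollary \ref{cor:pivotal-quantity} and invoking the monotonicity of $F$ from Lemma \ref{lem:F-monotone}. You additionally supply details the paper leaves tacit (well-definedness of the roots, the limiting values of the pivot, and the reversed ordering $\nu_{1-\alpha/2}(\y)\le\nu_{\alpha/2}(\y)$ of the endpoints), all of which are handled correctly.
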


\begin{figure}
\includegraphics[width = .48\textwidth]{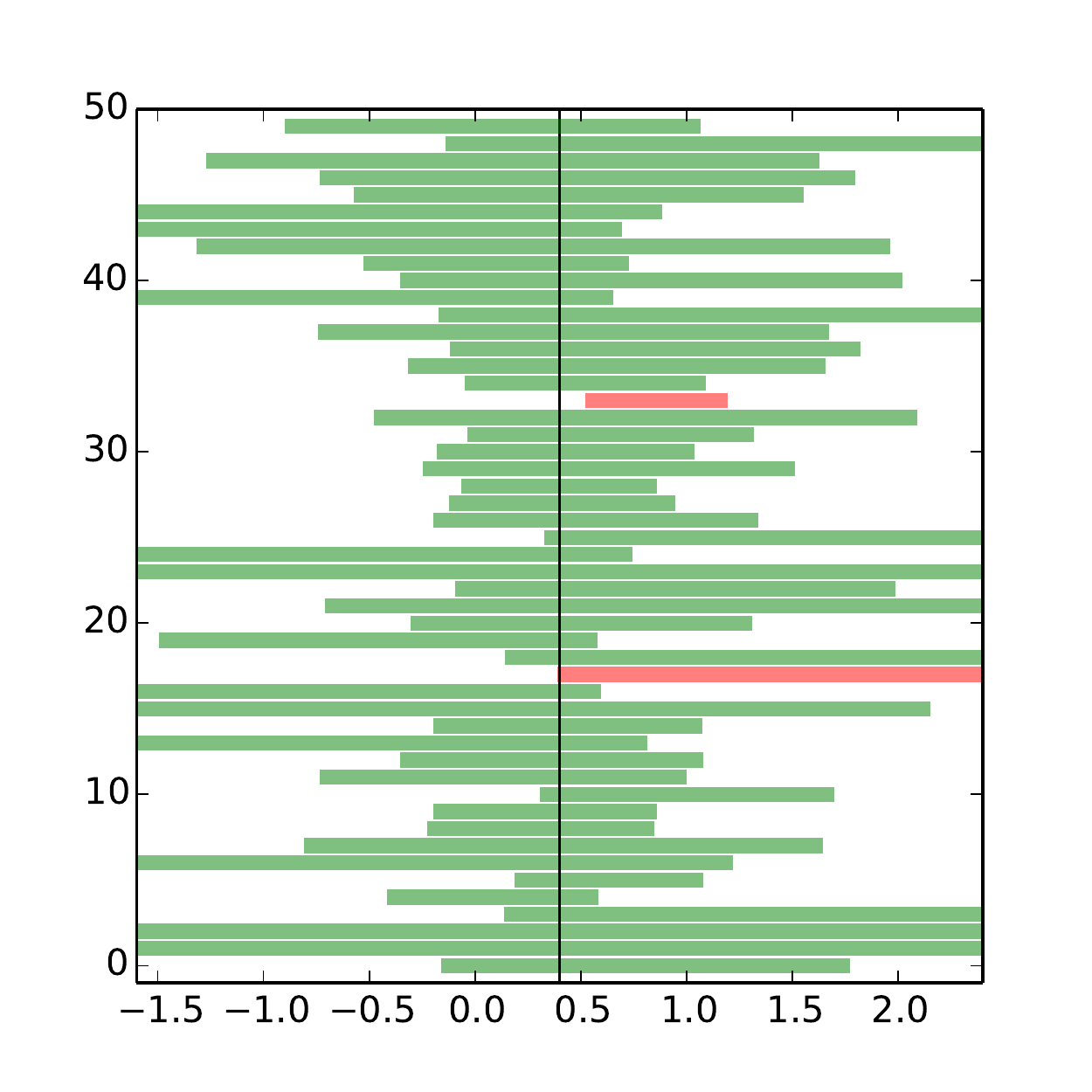}
\includegraphics[width = .48\textwidth]{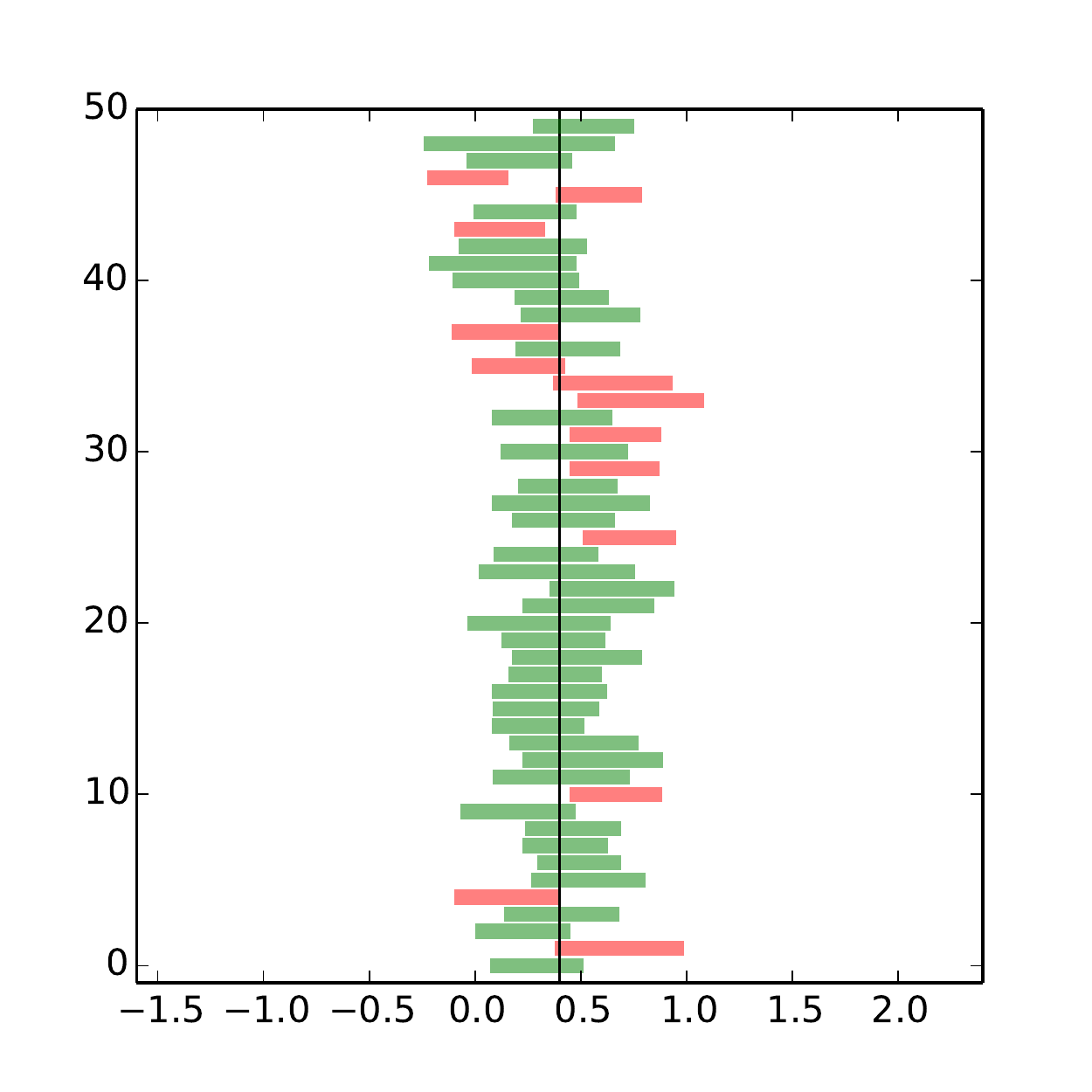}
\caption{95\% corrected (left) and normal (right) intervals for $\beta_1$ in a Type 1 Tobit model over 50 simulated data sets (n = 100, p = 10, $\sigma^2$ = 1.0, standard Gaussian design). A green bar means the confidence interval covers $\beta_1$ while a red bar means otherwise. Although the corrected intervals are wider, they cover $\beta_1$ (black line) at the nominal rate. Empirically, we observe the coverage of the normal intervals becomes worse as the signal strength decreases.}
\label{fig:corr-intervals}
\end{figure}

The strength of the aforementioned results do come at a price. In this case, the price is wider confidence intervals. For example, suppose $\eta^T\y$ falls near an endpoint of the truncation interval $[v^-,v^+]$, say $v^+$. Since this is likely to occur when $\eta^T\mubar$ is very large, there is a wide range of large values of $\eta^T\mubar$ that make this observation likely. However, when $\eta^T\y$ is not near an endpoint of the truncation interval (and the truncation interval) is large, then the confidence interval will be comparable to the least-squares (normal) intervals. Figure \ref{fig:tobit-1-interval-width} compares the coverage of the truncated intervals with that of normal intervals. 

\begin{figure}
\includegraphics[width=0.48\textwidth]{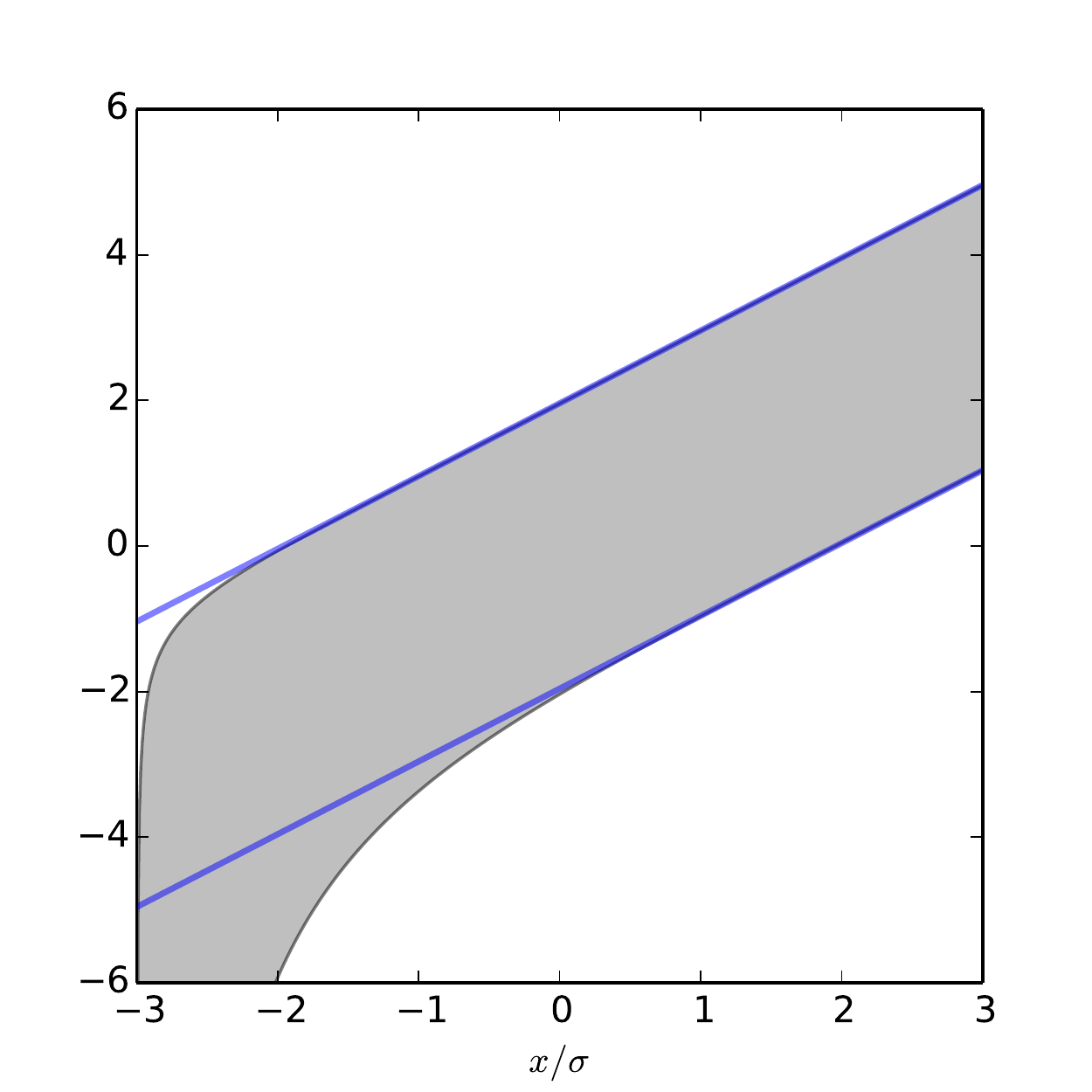}
\includegraphics[width=0.48\textwidth]{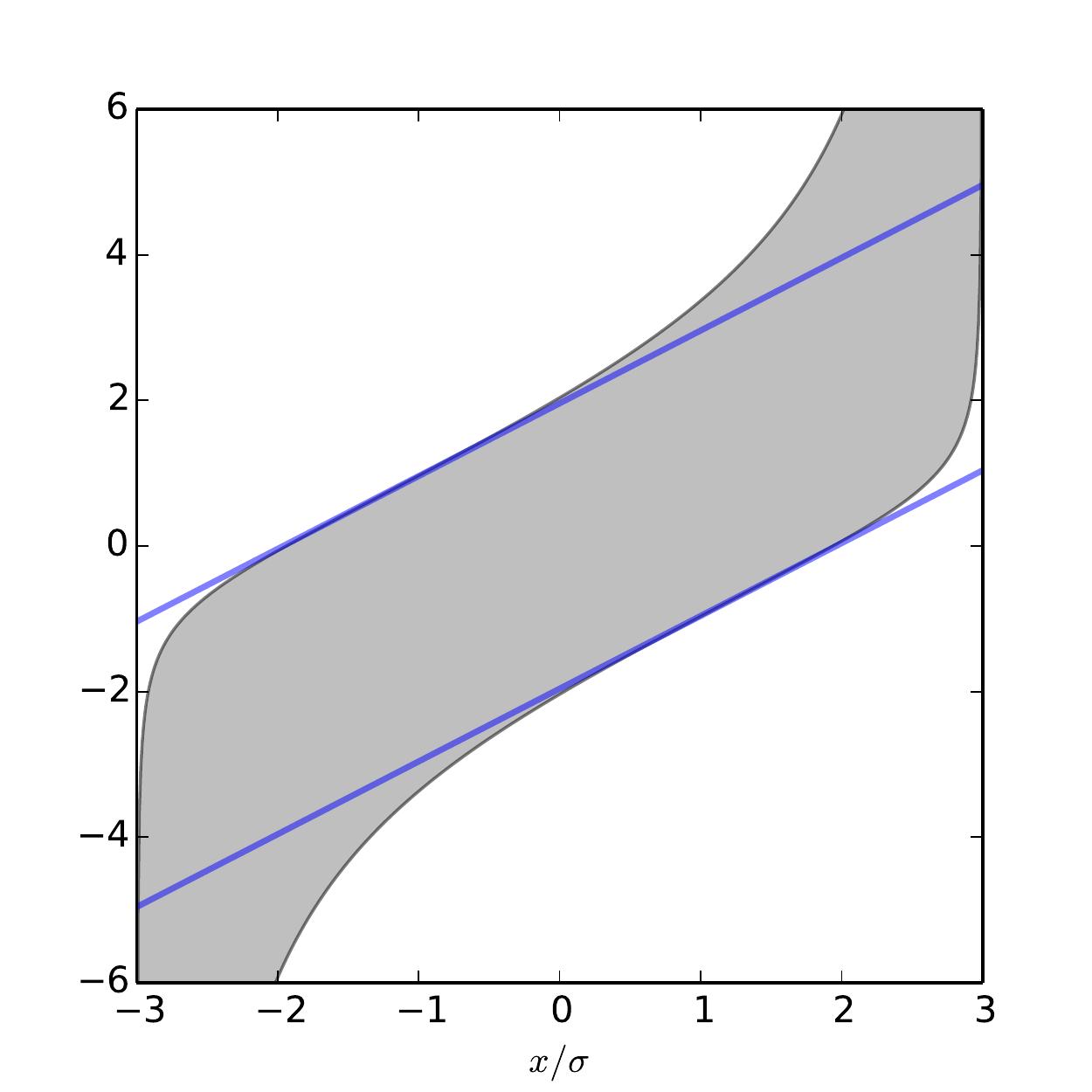}
\caption{95\% truncated (gray region) and normal (area between blue lines) confidence intervals as a function of $\frac{x}{\sigma}$. The truncation regions are $[-3\sigma,\infty]$ and $[-3\sigma,3\sigma]$ in the figures on the left and right. When $x$ falls near the boundary of the truncation region (in terms of $\sigma$), the conditional interval is much wider than the normal interval.}
\label{fig:tobit-1-interval-width}
\end{figure}

\subsection{Testing the significance of fitted coefficients}

We also seek to test the significance of the $j$-th regression coefficient $\hat{\beta}_j$, \ie{} test the hypothesis 
\[
H_0:\eta_j^T\X\beta = \beta_j = 0,\,\eta_j =(\X^\dagger)^Te_j.
\]
The output of the censoring mechanism is random, so $H_0$ is a \emph{random} hypothesis.
One possible interpretation of testing $H_0$ is we are testing $H_0$ conditioned on the censoring event. Since $H_0$ is fixed given $\y > 0$, this is a hypothesis in the conventional sense. As we shall see, there is also an \emph{unconditional} interpretation of testing $H_0$.

Under $H_0$, we know (by Corollary \ref{cor:pivotal-quantity}) 
\[
F(\eta_j^T\y,0,\sigma^2\norm{\eta_j}_2^2,\V_{\eta_j}^-(\y),\V_{\eta_j}^+(\y))\mid\y > 0\sim\unif(0,1),
\]
We reject $H_0$ when the pivotal quantity falls in the top or bottom $\frac\alpha2$ quantile to obtain a valid $\alpha$-level test for $H_0$. This test is the post-correction counterpart to the usual t-test and controls Type I error conditioned on the censoring event:
\[
\Prob(\text{reject }H_0\mid\y > 0,H_0) \le \alpha.
\]
Since the test controls the Type I error rate at 5\% for all possible outcomes of the censoring mechanism, the test also controls Type I error \emph{unconditionally}. In other words, the test also is a valid unconditional test of $H_0$.

\begin{lemma}
\label{lem:tests}
Let $\eta_j = (\X^\dagger)^T e_j$. Given $\y > 0$, the test that rejects when 
\[\textstyle
F(\eta_j^T\y,0,\sigma^2\norm{\eta_j}_2^2,\V_{\eta_j}^-(\y),\V_{\eta_j}^+(\y)) \in \left[0,\frac{\alpha}{2}\right)\cup\left(\frac{\alpha}{2},1\right]
\]
is a valid $\alpha$-level test for $H_0:\beta_j = 0$.
\end{lemma}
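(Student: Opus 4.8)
The plan is to derive the result as an immediate consequence of Corollary \ref{cor:pivotal-quantity}, since the proposed test is nothing more than the rejection rule induced by the pivotal quantity evaluated at the null value of the parameter. First I would invoke the corollary with $y = \y$, $\mu = \X\beta$, $\Sigma = \sigma^2 I$, and the censoring constraint $\y > 0$ encoded as the polyhedron $A\y \le b$ with $A = -I$ and $b = 0$. This yields
\[
F(\eta_j^T\y,\eta_j^T\X\beta,\sigma^2\norm{\eta_j}_2^2,\V_{\eta_j}^-(\y),\V_{\eta_j}^+(\y))\mid\y > 0 \sim \unif(0,1).
\]

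Next I would simplify the centering argument under the null. As already observed in the text, the choice $\eta_j = (\X^\dagger)^T e_j$ gives $\eta_j^T\X\beta = e_j^T\X^\dagger\X\beta = \beta_j$, so under $H_0:\beta_j = 0$ this second argument vanishes and the pivotal quantity collapses to $F(\eta_j^T\y,0,\sigma^2\norm{\eta_j}_2^2,\V_{\eta_j}^-(\y),\V_{\eta_j}^+(\y))$. Hence, conditioned on $\y > 0$ and under $H_0$, this statistic is exactly $\unif(0,1)$.

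Finally I would compute the conditional Type I error. Writing $U$ for the pivotal quantity, the test rejects precisely when $U$ lands in the prescribed tail region, and since $U\mid\y > 0$ is uniform under $H_0$, the rejection probability equals the Lebesgue measure of that region. Reading the region as the two symmetric $\tfrac{\alpha}{2}$-tails $[0,\tfrac{\alpha}{2})\cup(1-\tfrac{\alpha}{2},1]$ (consistent with the displayed description ``top or bottom $\tfrac{\alpha}{2}$ quantile''), that measure is $\alpha$, so
\[
\Prob(\text{reject }H_0\mid\y > 0, H_0) = \alpha,
\]
which establishes validity at level $\alpha$ conditionally, and therefore also unconditionally by the averaging argument already given in the surrounding discussion.

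The argument is essentially routine once Corollary \ref{cor:pivotal-quantity} is available, so there is no substantive obstacle. The only points requiring care are the bookkeeping of the null value---verifying that the centering slot of $F$ reduces to $\beta_j$ and hence to $0$ under $H_0$---and a small editorial check that the rejection region as printed carries uniform probability exactly $\alpha$; as written, the second tail $\left(\tfrac{\alpha}{2},1\right]$ appears to be a typo for $\left(1-\tfrac{\alpha}{2},1\right]$, which is the region that makes the total measure $\alpha$.
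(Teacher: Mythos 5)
Your proof is correct and takes essentially the same route as the paper, whose ``proof'' of this lemma is exactly the discussion that precedes it: invoke Corollary \ref{cor:pivotal-quantity} with $A=-I$, $b=0$ so that under $H_0$ the statistic is $\unif(0,1)$ given $\y>0$, reject in the two $\frac{\alpha}{2}$ tails, and pass from conditional to unconditional validity by averaging over the outcomes of the censoring mechanism. You are also right that the printed region $\left(\frac{\alpha}{2},1\right]$ is a typo for $\left(1-\frac{\alpha}{2},1\right]$ (as literally written the rejection region would have uniform probability one, whereas the intended two-tailed region has probability exactly $\alpha$; the same typo recurs in Lemma \ref{lem:tobit-3-tests}).
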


%
%

\section{Handling other censored regression models}

Two-step estimators are broadly applicable to many censored regression models, and the post-correction inference framework also generalizes accordingly. We already described how to perform post-correction inference for the standard (Type 1) Tobit model in Section \ref{sec:post-correction-inference}. In this section, we outline how the framework applies to other censored regression models.

\subsection{Accelerated failure time models}

Accelerated failure time (AFT) models are often used to model survival or duration data in science and engineering. Mathematically, AFT models are very similar to Tobit models. Let $t_i,\,i=1,\dots,n$ be the (possibly censored) failure times of $n$ units. AFT models posit a linear relationship between $\log(t_i)$ and some variables $x_i^T$; \ie
\BEQ
\log(t_i) = x_i^T\beta + \epsilon_i,\,i=1,\dots,n
\label{eq:accelerated-failure-time}
\EEQ
where $\epsilon$ is independent error. The distribution of $\epsilon$ determines AFT model in use. Since we only observe $t_i$ if unit $i$ fails before the end of the testing period (at some known $T$), the failure times are right-censored at $T$.
If the error is normally distributed (a log-normal AFT model), then we recognize the standard Tobit model with right-censoring at $\log(T)$:
\BEQ
y \sim \min\{y^*,\log(T)\},\,y^*\sim\cN(X\beta,\sigma^2I).
\label{eq:accelerated-failure-time-tobit}
\EEQ

To handle right censoring at $\log(T)$, we absorb $\log(T)$ into the intercept term and take $-\log(t_i)$ as the response. If the testing period is different for each unit, the model is slightly changed because the resulting model is equivalent to \eqref{eq:accelerated-failure-time-tobit} where an entry of $\beta$ is known. To fit a log-normal AFT model, we simply take $\log(T_i) - \log(t_i)$ as the response and call upon Algorithm \ref{alg:two-step-estimator-standard-tobit}. To obtain confidence intervals or test the significance of the fitted coefficients, we rely on the pivotal quantity \eqref{eq:pivotal-quantity-1}.

\subsection{Type 3 Tobit model}

The Type 3 model is:
\BEQ
\begin{aligned}
y_1^* &= X_1\beta_1 + \epsilon_1,\,\epsilon_1\sim\cN(0,\sigma_1^2I) \\
y_2^* &= X_2\beta_2 + \epsilon_2,\,\epsilon_1\sim\cN(0,\sigma_2^2I) \\
y_{1,i} &= \BCAS y_{1,i}^* & \text{if }y_{1,i}^* > 0 \\ 0 & \text{otherwise} \ECAS,\,i=1,\dots,n \\
y_{2,i} &= \BCAS y_{2,i}^* & \text{if }y_{1,i}^* > 0 \\ 0 & \text{otherwise} \ECAS,\,i=1,\dots,n.
\end{aligned}
\label{eq:type-3-tobit-model}
\EEQ
The error terms $\epsilon_1$ and $\epsilon_2$ are again normal with covariance $\sigma_{12}^2I$. The Type 1 \eqref{eq:tobit-model} and Type 2 \eqref{eq:type-2-tobit-model} Tobit models are special cases of the Type 3. 
Type 2 differs only in that $y_1$ is not observed, and Type 1 is special case of Type 2 in which $X_1 = X_2$ and $\epsilon_1 = \epsilon_2$. 

To derive the two-step estimator for the Type 3 Tobit, we first evaluate the bias incurred by censoring. For $y_1$, the bias has the same form as \eqref{eq:type-1-tobit-conditional-expect}:
\begin{align*}
\Expect[y_{1,i}\mid y_{1,i}^* > 0] &= x_{1,i}^T\beta_1 + \Expect[\epsilon_{1,i}\mid\epsilon_{1,i} > -x_{1,i}^T\beta_1] \\
&\textstyle= x_{1,i}^T\beta_1 + \sigma_1\lambda\left(\frac{1}{\sigma_1}x_{1,t}^T\beta_1\right),
\end{align*}
where $\lambda(x)$ is the inverse Mills ratio. For $y_2$, the bias is similar:
\[
\Expect[y_{2,i}^*\mid y_{1,i}^* > 0] = x_{2,i}^T\beta_2 + \Expect[\epsilon_{2,i}\mid\epsilon_{1,i} > -x_{1,i}^T\beta_1].
\]
Since $\epsilon_1$ and $\epsilon_2$ are jointly normal, $\Expect[\epsilon_{2,i}\mid\epsilon_{1,i} > -x_{1,i}^T\beta_1]$ is a simple linear function of $\Expect[\epsilon_{1,i}\mid\epsilon_{1,i} > -x_{1,i}^T\beta_1]$:
\BEQ\textstyle
\Expect[y_{2,i}\mid y_{1,i}^* > 0] = x_{2,i}^T\beta_2 + \tau\lambda\left(\frac{1}{\sigma_1}x_{1,i}^T\beta_1\right),\,\tau = \frac{\sigma_{12}}{\sigma_1}
\label{eq:conditional-expectation-z2}
\EEQ
The two-step estimator first estimates $\lambda\left(\frac{1}{\sigma_1}x_{1,i}^T\beta_1\right)$ with probit regression and then fits a corrected linear model to uncensored observations. As long as the probit MLE is consistent, the two-step estimator produces consistent estimates of $\beta_1$ and $\beta_2$.\footnote{\cite{olsen1980least} notes that the consistency of the two-step estimator does not require the joint normality of $\epsilon_1$ and $\epsilon_2$ provided $\epsilon_1$ is normal and $y_{2,i}^* = x_{2,i}^T\beta_2 + \frac{\sigma_{12}}{\sigma_1^2}(y_{1,i}^* - x_{i,1}^T\beta_1) + \epsilon'_{2,i}$ for some $\epsilon'_{2,i}$ independent of $y_{1,i}^*$. The asymptotic variance of two-step estimators under these more general conditions is given by \cite{lee1982some}.} The corrected linear model is again heteroscedastic:
\BEQ
\var[\epsilon_{2,i}\mid\epsilon_{1,i} > -x_{1,i}^T\beta_1] = \sigma_2^2 - \tau^2(x_{1,i}^T\alpha_1\lambda(x_{1,i}^T\alpha_1) + \lambda(x_{1,i}^T\alpha_1)^2).
\label{eq:type-3-tobit-conditional-variance}
\EEQ

\begin{algorithm}
\caption{Heckman's two-step estimator (Type 3 Tobit model)} 
\label{alg:two-step-estimator-type-3-tobit}
\begin{algorithmic}[1]
\Require design matrix $X$, responses $y$
\State Estimate $\alpha_1 = \frac{1}{\sigma_1} \beta_1$ with the probit MLE: 
\Statex \pc\pc $\hat{\alpha}_1 = \argmax_\alpha\,\prod_{i:y_i = 0}(1 - \Phi(x_{1,i}^T\alpha))\prod_{i:y_i \ne 0}\Phi(x_{1,i}^T\alpha)$. 
\State Regress the uncensored responses $\y_1$ on $\X_1$ and $\lambda(\X_1\hat{\alpha}_1)$ to estimate $\beta_1$: 
\Statex \pc\pc $\hat{\beta}_1,\hat{\sigma}_1 = \argmin_{\beta,\sigma}\frac12\|\y_1 - \sigma_1\hat{\lambda}- \X_1\beta\|_2^2$. 
\State Regress the uncensored responses $\y_2$ on $\X_2$ and $\hat{\lambda} = \lambda(\X_2\hat{\alpha}_1)$ to estimate $\beta_2$: 
\Statex \pc\pc $\hat{\beta}_2,\hat{\tau} = \argmin_{\beta,\tau}\frac12\|\y_2 - \tau\hat{\lambda}- \X_2\beta\|_2^2$. 
\State Obtain an estimate of $\sigma_2$ from the residuals of step 3:
\Statex \pc\pc $\hat{\sigma}_2 = \frac{1}{n-p}\|\y_2 - \X_2\hat{\beta}_2\|_2^2 + \hat{\tau}^2\frac1n\sum_{i=1}^nx_{1,i}^T\hat{\alpha}_1\lambda(x_{1,i}^T\hat{\alpha}_1) + \lambda(x_{1,i}^T\hat{\alpha}_1)^2$.
\end{algorithmic}
\end{algorithm}

 The two-step estimate of $\beta_1$ is given by the first part of
\[
\hat{\gamma}_1 = \hat{Z}_1^\dagger\y_1,\,\hat{Z}_1 = \BMAT \X_1 & \hat{\lambda} \EMAT,
\]
Given the censoring event $\y_1 > 0$, $\y_1$ has a constrained normal distribution. To perform (valid) post-correction inference for $\beta_{1,j} = e_j^T\X_1^\dagger\X_1\beta_1$, we apply the results in Section \ref{sec:post-correction-inference}. Let $\eta_{1,j} = (\X_1^\dagger)^Te_j$. By Corollary \ref{cor:pivotal-quantity}, we know
\[
F(\eta_{1,j}^T\y_1, \beta_{1,j},\sigma_1^2\norm{\eta_{1,j}}_2^2, \V_{\eta_{1,j}}^-(\y_1), \V_{\eta_{1,j}}^+(\y_1))\mid\y_1 > 0 \sim \unif(0,1),
\]
where $\V_\eta^+(y),\,\V_\eta^-(y),\,\V_\eta^0(y)$ are given by \eqref{eq:V-plus}, \eqref{eq:V-minus}, \eqref{eq:V-zero} with $A = -I$ and $b = 0$. To form intervals for $\beta_{1,j}$, we ``invert'' $F$ to obtain 
\[
\Prob(\beta_{1,j}\in[\nu_{\alpha/2}(\y_1),\nu_{1-\alpha/2}(\y_1)]\mid\y_1>0) = 1 - \alpha,
\]
where $\nu_\alpha(\y)$ is the (unique) root of 
\BEQ
F(\eta_{1,j}^T\y_1, \nu,\sigma_1^2\norm{\eta_{1,j}}_2^2, \V_{\eta_{1,j}}^-(\y_1), \V_{\eta_{1,j}}^+(\y_1)) = \alpha.
\label{eq:F-roots}
\EEQ
To test the significance of $\beta_{1,j}$, we form a p-value
\[
F(\eta_{1,j}^T\y_1, 0,\sigma_1^2\norm{\eta_{1,j}}_2^2, \V_{\eta_{1,j}}^-(\y_1), \V_{\eta_{1,j}}^+(\y_1))
\]
with $\unif(0,1)$ distribution under the null $H_0:\beta_j = 0$. We reject $H_0$ when the p-value is smaller than $\frac{\alpha}{2}$ or larger than $1-\frac{\alpha}{2}$. The corrected intervals for $\beta_1$ should behave comparably with the corrected intervals for $\beta$ in a Type 1 Tobit model. 

The two-step estimate of $\beta_2$ is given by the first part of 
\[
\hat{\gamma}_2 = \hat{Z}_2^\dagger\y_2,\,\hat{Z}_2 = \BMAT \X_2 & \hat{\lambda} \EMAT,
\]
To perform post-correction inference on $\beta_2$, we must account for the dependence between $\y_1$ and $\y_2$. Given the censoring event $\y_1 > 0$, the pair $(\y_1,\y_2)$ has a constrained normal distribution, \ie{} 
\[
\BMAT \y_1 \\ \y_2\EMAT \sim \cN\left(\BMAT X_1\beta_1 \\ X_2\beta_2\EMAT,\BMAT \sigma_1^2 I & \sigma_{12} I \\ \sigma_{12} I & \sigma_2^2 I\EMAT\right),\text{ subject to }\y_1 > 0.
\]
Let $\y = (\y_1,\y_2)$ and $\mubar,\Sigmabar$ be its (unconditional) expected value and covariance. To form intervals for $\beta_{2,j}$, we first express our target as
\[
\beta_{2,j} = e_j^T\X_2^\dagger\X_2\beta_2 = \underbrace{e_j^T\BMAT 0 & \X_2^\dagger\EMAT}_{\eta_{2,j}^T}\mubar.
\]
By Corollary \ref{cor:pivotal-quantity}, we know
\BEQ
F\left(\eta_{2,j}^T\y, \beta_{2,j},\eta_{2,j}^T\Sigmabar\eta_{2,j}, \V_{\eta_{2,j}}^-(\y), \V_{\eta_{2,j}}^+(\y)\right)\mid\y_1 > 0,
\label{eq:pivotal-quantity-3}
\EEQ
where $\V_\eta^+(y),\,\V_\eta^-(y),\,\V_\eta^0(y)$ are given by \eqref{eq:V-plus}, \eqref{eq:V-minus}, \eqref{eq:V-zero}, with $A = \BMAT-I & 0\EMAT$ and $b = 0$, is uniformly distributed. With this pivotal quantity, we derive confidence intervals and significance tests for  $\beta_{2,j}$ like in Section \ref{sec:post-correction-inference}.

To form confidence intervals for $\beta_{2,j}$, we ``invert'' $F$ to obtain 
\[
\Prob(\beta_{2,j}\in[\nu_{\alpha/2}(\y),\nu_{1-\alpha/2}(\y)]\mid\y_1>0) = 1 - \alpha,
\]
where $\nu_\alpha(\y)$ is the (unique) root of 
\[
F(\eta_{2,j}^T\y, \nu,\eta_{2,j}^T\Sigmabar\eta_{2,j}, \V_{\eta_{2,j}}^-(\y), -\infty, \infty) = \alpha.
\]
To test the significance of $\beta_{2,j}$, we form a p-value
\[
F(\eta_{2,j}^T\y, 0,\eta_{2,j}^T\Sigmabar\eta_{2,j}, -\infty, \infty)
\]
with $\unif(0,1)$ distribution under the null $H_0:\beta_{2,j} = 0$. We reject $H_0$ when the p-value is smaller than $\frac{\alpha}{2}$ or larger than $1-\frac{\alpha}{2}$. Figure \ref{fig:tobit-3-intervals} shows result from two simulations that compare the coverage of the corrected intervals versus the normal intervals. We summarize our results in a pair of lemmas. 

\begin{lemma}
\label{lem:tobit-3-intervals}
Let $\eta_{2,j} = \BMAT 0 & \X_2^\dagger\EMAT^Te_j$. Define $\nu_\alpha(\y)$ to be the (unique) root of 
\[
F\left(\eta_{2,j}^T\y, \beta_{2,j},\eta_{2,j}^T\Sigmabar\eta_{2,j}, \V_{\eta_{2,j}}^-(\y), \V_{\eta_{2,j}}^+(\y)\right) = \alpha.
\]
Given $\y > 0$, $[\nu_{\alpha/2}(\y),\nu_{1-\alpha/2}(\y)]$ is a valid $1-\alpha$ confidence interval for $\beta_{2,j}$:
\[
\Prob(\beta_{2,j}\in[\nu_{\alpha/2}(\y),\nu_{1-\alpha/2}(\y)]\mid\y_1>0) = 1 - \alpha,
\]
\end{lemma}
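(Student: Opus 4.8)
The plan is to recognize Lemma \ref{lem:tobit-3-intervals} as a direct instance of the inversion argument behind Lemma \ref{lem:intervals}, now carried out for the \emph{joint} response vector $\y = (\y_1,\y_2)$ rather than a single censored vector. The only genuinely new ingredient is bookkeeping: the censoring event constrains $\y_1$ while the inferential target concerns $\beta_2$, and the correlation $\sigma_{12}$ between the two error vectors must be tracked through the joint covariance $\Sigmabar$.

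First I would fix the conditional law of the data. Under the Type 3 model \eqref{eq:type-3-tobit-model}, conditioning on the censoring event $\y_1 > 0$ leaves $\y = (\y_1,\y_2)$ distributed as a constrained normal $\cN(\mubar,\Sigmabar)$ subject to $A\y \le b$, where $A = \BMAT -I & 0 \EMAT$ and $b = 0$ encode exactly the event $\y_1 > 0$, and where $\Sigmabar$ is the block matrix with diagonal blocks $\sigma_1^2 I$, $\sigma_2^2 I$ and off-diagonal block $\sigma_{12} I$. Next I would verify that the chosen contrast recovers the target, namely $\eta_{2,j}^T\mubar = \beta_{2,j}$, which is exactly the identity established in the text preceding the lemma. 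With these identifications, Corollary \ref{cor:pivotal-quantity} applies verbatim with $\eta = \eta_{2,j}$, $\mu = \mubar$, $\Sigma = \Sigmabar$, and the above $(A,b)$, yielding that
\[
F\!\left(\eta_{2,j}^T\y, \beta_{2,j}, \eta_{2,j}^T\Sigmabar\eta_{2,j}, \V_{\eta_{2,j}}^-(\y), \V_{\eta_{2,j}}^+(\y)\right)\,\big|\,\y_1 > 0 \sim \unif(0,1).
\]

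The remaining step is to invert this pivot into an interval for $\beta_{2,j}$. By Lemma \ref{lem:F-monotone}, the map $\nu \mapsto F(\eta_{2,j}^T\y, \nu, \eta_{2,j}^T\Sigmabar\eta_{2,j}, \V_{\eta_{2,j}}^-(\y), \V_{\eta_{2,j}}^+(\y))$ is strictly decreasing, so each equation of the form $F(\cdots,\nu,\cdots) = \alpha$ has a unique root $\nu_\alpha(\y)$, and the two-sided event $\{\alpha/2 \le F(\cdots,\beta_{2,j},\cdots) \le 1-\alpha/2\}$ coincides with the event that $\beta_{2,j}$ lies between $\nu_{\alpha/2}(\y)$ and $\nu_{1-\alpha/2}(\y)$. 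Since the pivot is $\unif(0,1)$ given $\y_1 > 0$, this two-sided event has conditional probability $(1-\alpha/2) - \alpha/2 = 1-\alpha$, which is precisely the claimed coverage. Strict monotonicity also fixes the orientation of the endpoints, so the bracket $[\nu_{\alpha/2}(\y),\nu_{1-\alpha/2}(\y)]$ is to be read as the interval spanned by the two roots.

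The main obstacle is conceptual rather than computational: one must resist working with $\y_2$ in isolation. Because $\eta_{2,j}$ is supported on the second block while $\Sigmabar$ couples the blocks through $\sigma_{12} I$, the direction $\Sigmabar\eta_{2,j}$ along which $\V_{\eta_{2,j}}^\pm$ are measured has a nonzero first-block component, so moving along it perturbs $\y_1$ and hence the active constraint $\y_1 > 0$. The truncation limits therefore genuinely depend on the correlation, and it is essential that Corollary \ref{cor:pivotal-quantity} imposes no alignment requirement between the contrast $\eta_{2,j}$ and the constraint normals in $A$. Once this is recognized, no new estimate is needed and the argument closes exactly as in Lemma \ref{lem:intervals}.
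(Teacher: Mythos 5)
Your proposal is correct and follows essentially the same route as the paper: it instantiates Corollary \ref{cor:pivotal-quantity} for the joint vector $\y=(\y_1,\y_2)$ with $A = \begin{bmatrix}-I & 0\end{bmatrix}$, $b=0$, uses the identity $\eta_{2,j}^T\mubar = \beta_{2,j}$, and then inverts the uniform pivot via the monotonicity of $F$ in $\nu$ (Lemma \ref{lem:F-monotone}), exactly as the paper does for Lemma \ref{lem:intervals} and in the discussion preceding Lemma \ref{lem:tobit-3-intervals}. Your added remark that $\Sigmabar\eta_{2,j}$ has a nonzero first-block component, so the truncation limits genuinely depend on $\sigma_{12}$, is a correct observation consistent with the paper's emphasis on accounting for the dependence between $\y_1$ and $\y_2$.
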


\begin{lemma}
\label{lem:tobit-3-tests}
Let $\eta_{1,j} = \BMAT 0 & \X_2^\dagger\EMAT^Te_j$. Given $\y > 0$, the test that rejects when 
\[\textstyle
F\left(\eta_{2,j}^T\y, 0,\eta_{2,j}^T\Sigmabar\eta_{2,j}, \V_{\eta_{2,j}}^-(\y), \V_{\eta_{2,j}}^+(\y)\right) \in \left[0,\frac{\alpha}{2}\right)\cup\left(\frac{\alpha}{2},1\right]
\]
is a valid $\alpha$-level test for $H_0:\beta_{2,j} = 0$.
\end{lemma}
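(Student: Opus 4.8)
The plan is to recognize this lemma as the Type~3 counterpart of Lemma~\ref{lem:tests}, so that it reduces to a single application of Corollary~\ref{cor:pivotal-quantity} once the censoring event, the contrast, and the target are set up correctly for the \emph{joint} Gaussian $\y=(\y_1,\y_2)$ rather than for $\y_2$ marginally.

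First I would encode the censoring event $\y_1>0$ as a polyhedral constraint $A\y\le b$ on the full vector $\y$, taking $A=\BMAT -I & 0\EMAT$ and $b=0$, so that $A\y\le b$ is exactly $\y_1>0$; here $\y\sim\cN(\mubar,\Sigmabar)$ unconditionally, where $\Sigmabar$ carries the cross-covariance $\sigma_{12}I$ between the two blocks. Next I would record the one-line identity $\eta_{2,j}^T\mubar = e_j^T\BMAT 0 & \X_2^\dagger\EMAT\mubar = e_j^T\X_2^\dagger\X_2\beta_2 = \beta_{2,j}$, which identifies the mean parameter of the contrast $\eta_{2,j}^T\y$ with the quantity being tested. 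Under $H_0:\beta_{2,j}=0$ this mean is $0$, so the statistic appearing in the lemma (with its second argument set to $0$) is precisely the pivot of Corollary~\ref{cor:pivotal-quantity} evaluated at the true mean.

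With this setup, Corollary~\ref{cor:pivotal-quantity}, applied to $y=\y$, $\mu=\mubar$, $\Sigma=\Sigmabar$, contrast $\eta=\eta_{2,j}$, and constraint $A\y\le b$, gives at once that
\[
F\bigl(\eta_{2,j}^T\y,\,0,\,\eta_{2,j}^T\Sigmabar\eta_{2,j},\,\V_{\eta_{2,j}}^-(\y),\,\V_{\eta_{2,j}}^+(\y)\bigr)\;\bigm|\;\y_1>0,\,H_0 \;\sim\; \unif(0,1),
\]
with $\V_{\eta_{2,j}}^\pm$ computed from \eqref{eq:V-plus}--\eqref{eq:V-zero} for these $A,b,\Sigmabar$. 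Because the statistic is conditionally uniform, the rejection region—the union of the lower $\tfrac{\alpha}{2}$ tail and the upper $\tfrac{\alpha}{2}$ tail of $(0,1)$, a set of Lebesgue measure $\alpha$—has conditional probability equal to its measure, whence $\Prob(\text{reject }H_0\mid\y_1>0,H_0)=\alpha$, a valid conditional $\alpha$-level test. Finally, since this holds for \emph{every} outcome of the censoring mechanism, averaging over that outcome gives unconditional validity as well.

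The step I expect to require the most care is the instantiation of $\V_{\eta_{2,j}}^\pm$ in this degenerate, cross-correlated geometry: the constraint $A$ touches only the $\y_1$ block while the contrast $\eta_{2,j}$ touches only the $\y_2$ block, yet the truncation limits are generally finite and data-dependent, because $a=A\Sigmabar\eta_{2,j}/(\eta_{2,j}^T\Sigmabar\eta_{2,j})=-\sigma_{12}(\X_2^\dagger)^Te_j/(\eta_{2,j}^T\Sigmabar\eta_{2,j})$ is nonzero whenever $\sigma_{12}\neq0$. It is exactly this coupling through $\sigma_{12}$ that forces us to use the full joint covariance $\Sigmabar$ and that distinguishes the corrected test from the naive normal one; verifying that Corollary~\ref{cor:pivotal-quantity} applies verbatim to this joint, correlated vector is the only nonroutine point, and it does because the corollary is stated for an arbitrary Gaussian $\cN(\mu,\Sigma)$ and polyhedron $Ay\le b$.
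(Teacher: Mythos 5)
Your proof is correct and follows essentially the same route as the paper: applying Corollary \ref{cor:pivotal-quantity} to the joint vector $\y=(\y_1,\y_2)$ with $A=\BMAT -I & 0\EMAT$, $b=0$, and the full covariance $\Sigmabar$, identifying $\eta_{2,j}^T\mubar=\beta_{2,j}$, and rejecting on the two $\tfrac{\alpha}{2}$ tails of the resulting conditionally uniform pivot, with unconditional validity following by averaging over censoring outcomes. Your closing observation about the coupling through $\sigma_{12}$ (which makes the truncation limits finite even though the constraint and contrast live on different blocks) is a nice sanity check that matches the paper's reason for working with the joint, rather than marginal, distribution.
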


\begin{figure}
\includegraphics[width = .48\textwidth]{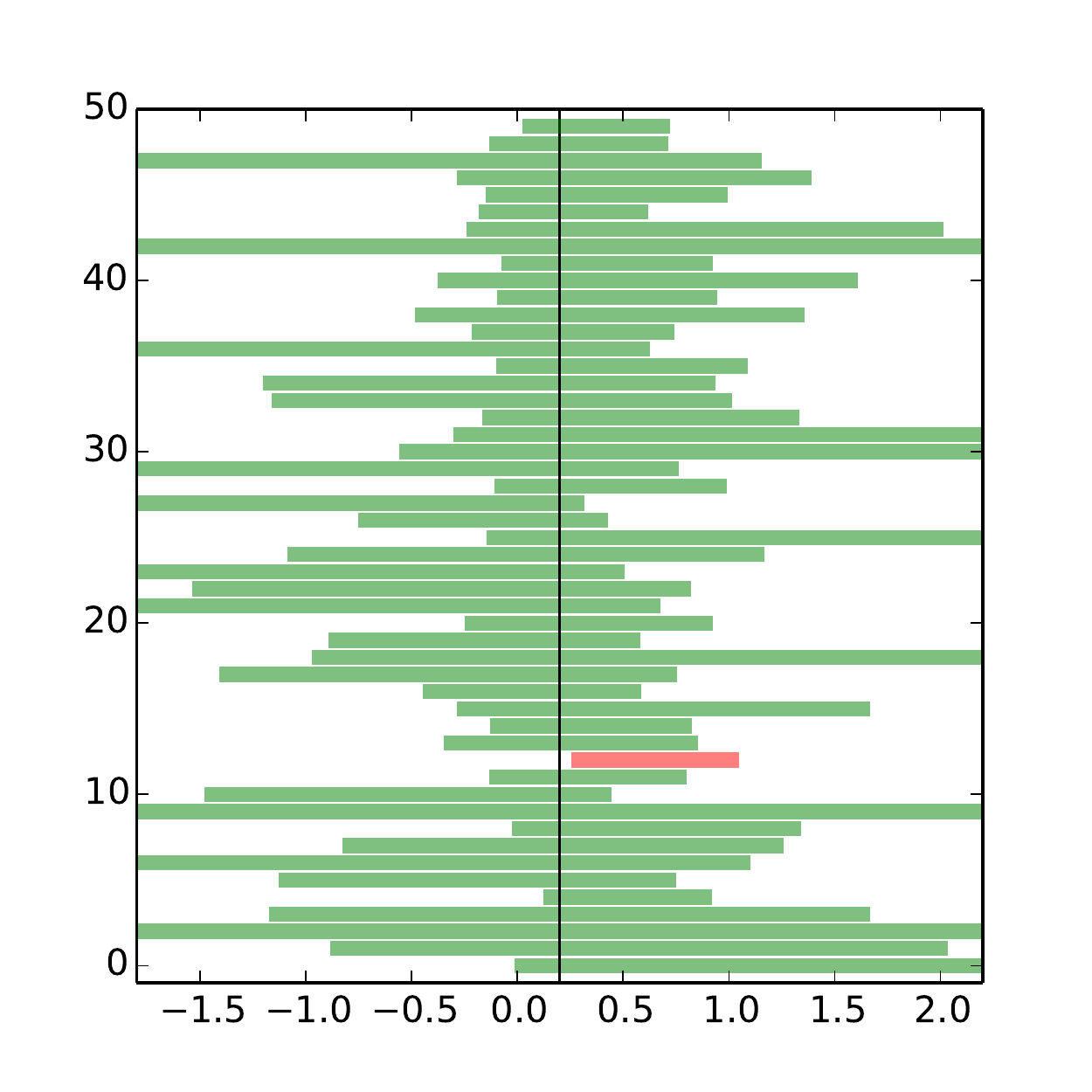}
\includegraphics[width = .48\textwidth]{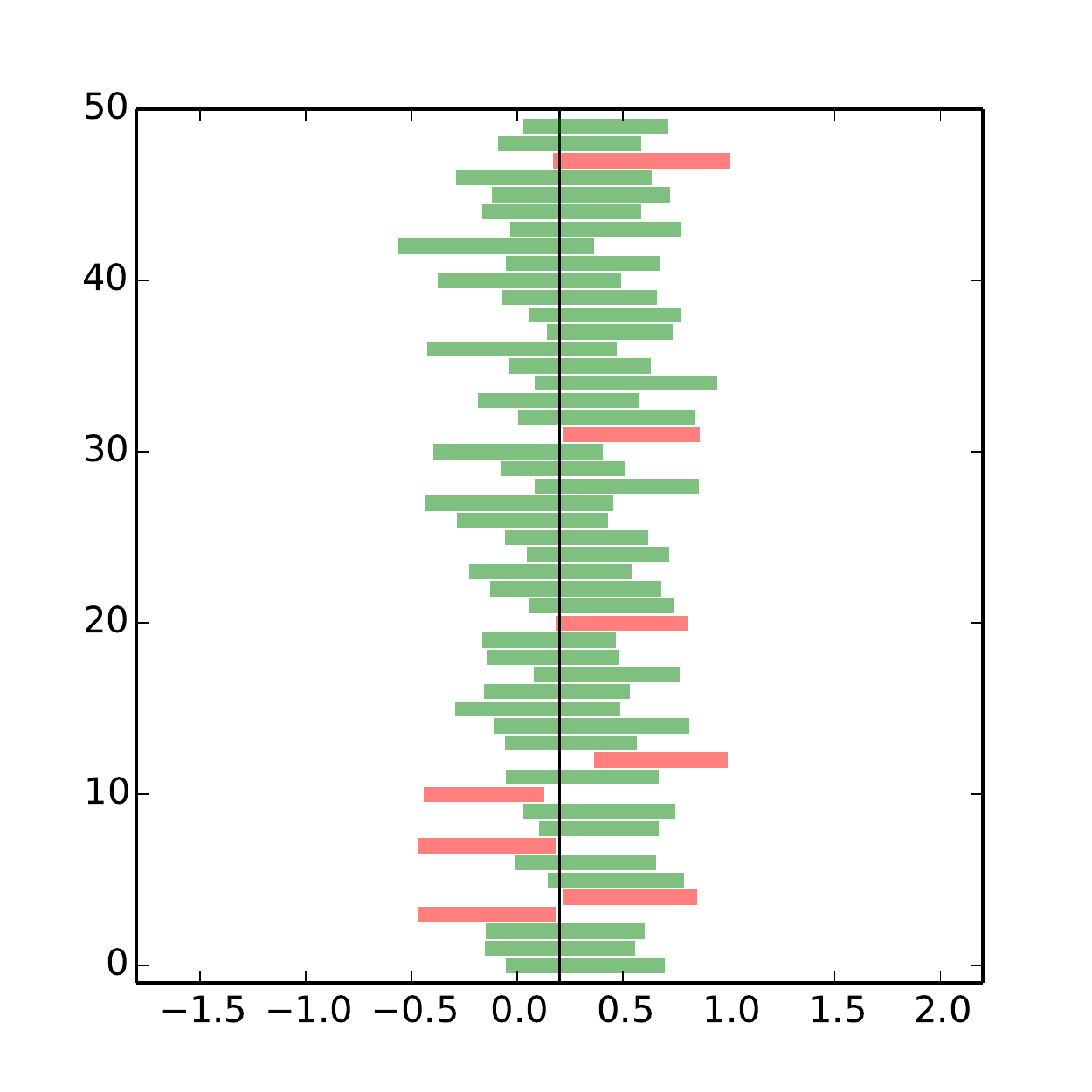}
\caption{95\% corrected (left) and normal (right) intervals for $\beta_{2,1}$ in a Type 3 Tobit model over 50 simulated data sets (n = 100, $p_1 = p_2$ = 10, $\sigma_1^2 = \sigma_2^2$ = 1.0, $\sigma_{12} = 0.5$, standard Gaussian design). A green bar means the confidence interval covers $\beta_{2,1}$ while a red bar means otherwise. Although the corrected intervals are wider, they cover $\beta_{2,1}$ (black line) at the nominal rate. Empirically, we observe the coverage of the normal intervals becomes worse as the signal strength decreases.}
\label{fig:tobit-3-intervals}
\end{figure}

\subsection{Type 2 Tobit model}

The Type 2 Tobit model is very similar to the Type 3 model, except the $y_1^*$ not observed:
\BEQ
\begin{aligned}
y_1^* &= X_1\beta_1 + \epsilon_1,\,\epsilon_1\sim\cN(0, I) \\
y_2^* &= X_2\beta_2 + \epsilon_2,\,\epsilon_1\sim\cN(0,\sigma_2^2I) \\
y_{2,i} &= \BCAS y_{2,i}^* & \text{if }y_{1,i}^* > 0 \\ 0 & \text{otherwise} \ECAS,\,i=1,\dots,n.
\end{aligned}
\label{eq:type-2-tobit-model}
\EEQ
The error terms $\epsilon_1$ and $\epsilon_2$ are again jointly normal with covariance $\sigma_{12}^2I$. Without loss of generality, we set $\sigma_1^2 = 1$. Type 2 is also called the \emph{sample selection model} and plays significant parts in the evaluation of treatment effects and program evaluation. To fit a Type 2, we rely on Heckman's two-step estimator (Algorithm \ref{alg:two-step-estimator-type-3-tobit}) except we skip step 2 (estimating $\beta_1$).

The two-step estimate of $\beta_2$ is given by the first part of 
\[
\hat{\gamma}_2 = \hat{Z}_2^\dagger\y_2,\,\hat{Z}_2 = \BMAT \X_2 & \hat{\lambda} \EMAT,
\]
Given the censoring event $\y_1 > 0$, the pair $(\y_1,\y_2)$ has a constrained normal distribution, \ie{} the pdf of $\y = (\y_1,\y_2)$ is
\[
\BMAT \y_1 \\ \y_2\EMAT \sim \cN\left(\BMAT X_1\beta_1 \\ X_2\beta_2\EMAT,\BMAT I & \sigma_{12} I \\ \sigma_{12} I & \sigma_2^2 I\EMAT\right),\text{ subject to }\y_1 > 0.
\]
Let $\y = (\y_1,\y_2)$ and $\mubar,\Sigmabar$ be its (unconditional) expected value and covariance. To perform inference on $\beta_{2,j}$, we first express our target as
\[
\beta_{2,j} = e_j^T\X_2^\dagger\X_2\beta_2 = \underbrace{e_j^T\BMAT 0 & \X_2^\dagger\EMAT}_{\eta_{2,j}^T}\mubar.
\]


Since $\y_1$ is not observed, we cannot evaluate $\V_{\eta_{2,j}}^+(\y),\V_{\eta_{2,j}}^-(\y),\V_{\eta_{2,j}}^0(\y)$ to form the pivotal quantity \eqref{eq:pivotal-quantity-3}. To perform post-correction inference on $\beta_{j,2}$, we simulate the distribution of $\eta_{2,j}^T\y\mid\y_1 > 0$ with the parametric bootstrap \citep{efron1994introduction}. Since the pairs $(\y_{1,i},\y_{2,i})$ are independent, we simulate $\y\mid\y_1 > 0$ by simulating $n$ pairs
\[
\BMAT \y_1^* \\ \y_2^*\EMAT \sim \cN\left(\BMAT x_{1,i}^T\hat{\alpha}_1 \\ x_{2,i}^T\hat{\beta}_2\EMAT,\BMAT 1 & \hat{\sigma}_{12} \\ \hat{\sigma}_{12} & \hat{\sigma}_2^2\EMAT\right),\text{ subject to }\y_1^* > 0.
\]
Although simulating truncated normals is, in general, intensive, the fact that $\y_2^*$ is unconstrained, so the marginal distribution of $\y_1^*$ is a (univariate) truncated normal, allows us to simulate $(\y_1^*,\y_2^*)$ efficiently. To simulate a pair $(\y_1^*,\y_2^*)$, we first simulate $\y_1^*$ (with say the inverse CDF transform) and then simulate $\y_2^*$ conditioned on $\y_1^*$:
\BEQ
\begin{aligned}
\y_1^* &\sim TN(x_{1,i}^T\hat{\alpha}_1, 1, 0, \infty) \\
\y_2^* &\sim \cN(x_{2,i}^T\hat{\beta}_2 + \hat{\sigma}_{12}(\y_1^* - x_{1,i}^T\hat{\alpha}_1), \hat{\sigma}_2^2 - \hat{\sigma}_{12}^2).
\end{aligned}
\label{eq:simulate-bivariate-truncated-normal}
\EEQ

Given the bootstrap distribution of $\hat{\beta}_{j,2}$, it is straightforward to obtain confidence intervals and test the significance of $\beta_{j,2}$. We refer to \cite{efron1994introduction} for details. Figure \ref{fig:tobit-2-intervals} shows result from two simulations that compare the coverage of bootstrap versus normal intervals. We note the bootstrap intervals are comparable in size with the normal intervals.

\begin{figure}
\includegraphics[width = .48\textwidth]{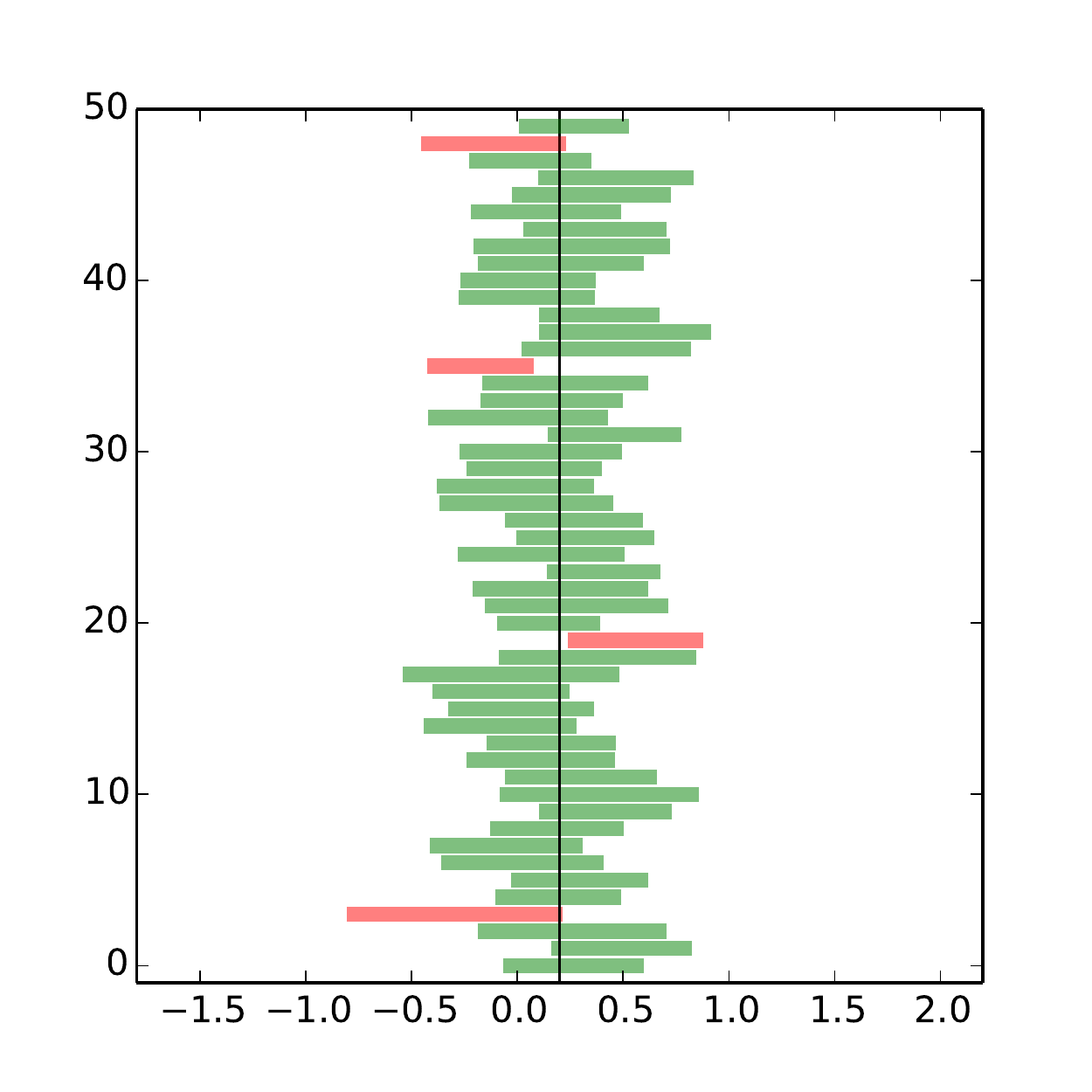}
\includegraphics[width = .48\textwidth]{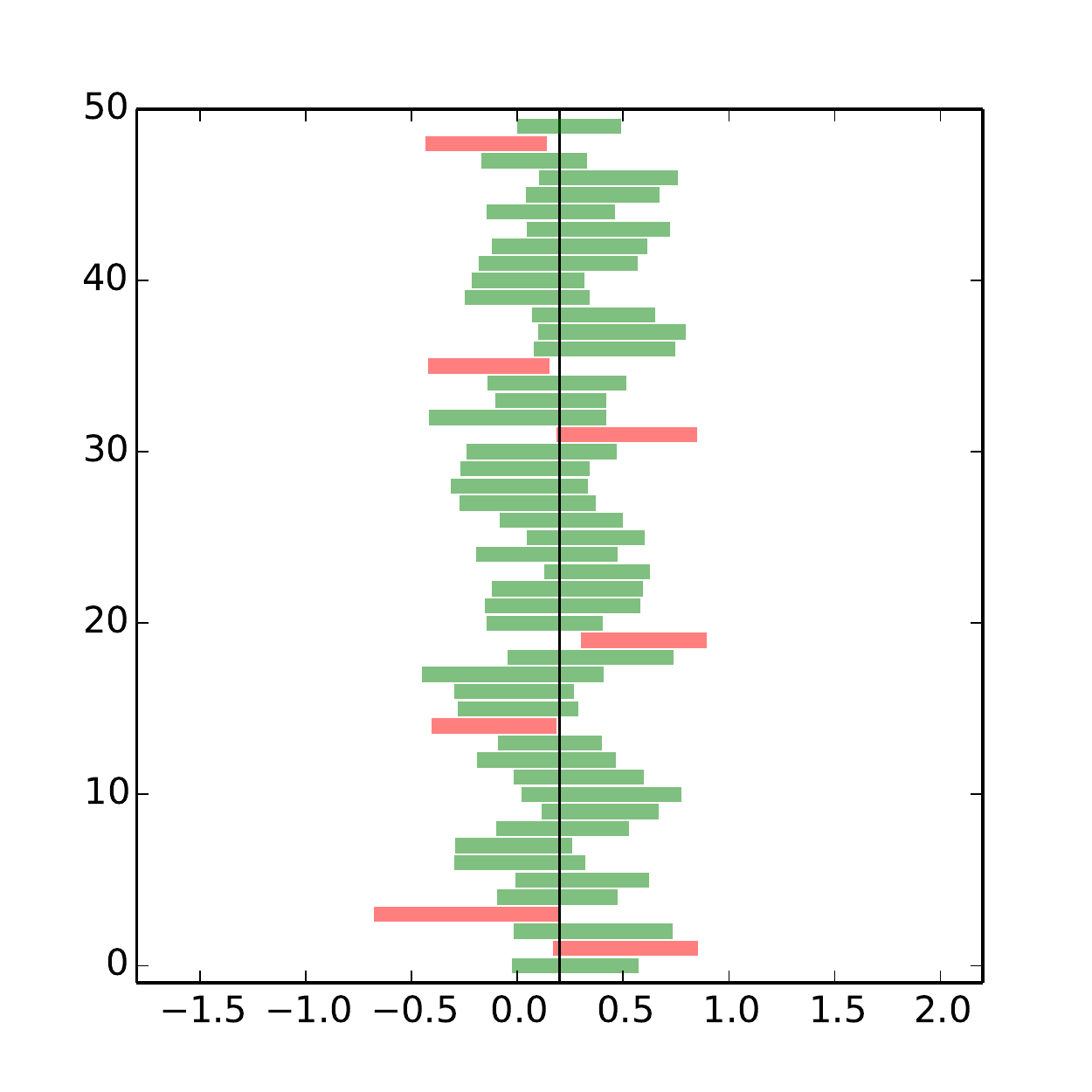}
\caption{95\% bias-corrected bootstrap (left) and normal (right) intervals for $\beta_{2,1}$ in a Type 2 Tobit model over 50 simulated data sets. The simulations are set up  like the simulations shown in Figure \ref{fig:tobit-2-intervals}. The bootstrap intervals are computed with 1000 bootstrap samples. A green bar means the confidence interval covers $\beta_{2,1}$ while a red bar means otherwise. }
\label{fig:tobit-2-intervals}
\end{figure}

\section{Summary and discussion}
\label{sec:summary}

We proposed a framework for conducting post-correction inference with two-step estimators. By conditioning on the censoring event, we obtain valid confidence intervals and significance tests for the fitted coefficients. We developed the framework with the standard (Type 1) Tobit model and showed how it generalizes to handle other censored regression models.

Although we must condition on the censoring event to perform inference, the results are valid \emph{unconditionally}. In Section \ref{sec:post-correction-inference}, we really derived a family of valid intervals/tests, one per possible outcome of the censoring mechanism. Given any outcome (censoring event), as long as we form the correct confidence interval, the interval will have the nominal coverage rate. In other words, the strategy of forming the correct interval given the censoring event inherits the validity of the conditional intervals. A similar strategy inherits the Type I error rate of the conditional tests. 

Our framework for performing post-correction inference may be readily combined with the framework of \cite{lee2013post} for post-(model) selection inference. In practice, one might wish to first select a model with the data and then perform inference on the selected model. For example, one might fit a Tobit model, observe which coefficients are significant at level $\alpha$, and report $1-\alpha$ confidence intervals for the significant coefficients. However, these intervals fail to account for the randomness in the selected model and may fail to cover the target at the nominal rate. By combining our framework with the framework of \cite{lee2013post}, it's possible to perform valid inference post-correction and post-(model) selection.

\section*{Acknowledgements} Will Fithian provided the geometric interpretation of Theorem \ref{thm:truncated-normal}. Y. Sun was partially supported by the NIH, grant U01GM102098. J.E. Taylor was supported by the NSF, grant DMS 1208857, and by the AFOSR, grant 113039.

\appendix
\section*{Appendix}
\subsection{Proof of Theorem \ref{thm:truncated-normal}}
\label{sec:truncated-normal-proof}

\begin{theorem}
Let $y\sim\cN(\mu,\Sigma)$. Define $a = \frac{A\Sigma\eta}{\eta^T\Sigma\eta}$ and
\begin{align*}
\V_\eta^+(y) &= \sup_{j:a_j < 0}\frac{1}{a_j}(b_j - (Ay)_j + a_j\eta^Ty) \\
\V_\eta^-(y) &= \inf_{j:a_j > 0}\frac{1}{a_j}(b_j - (Ay)_j + a_j\eta^Ty) \\
\V_\eta^0(y) &= \inf_{j:a_j = 0} b_j - (Ay)_j.
\end{align*}
Then, $\eta^Ty$ conditioned on $Ay \le b,\V_\eta^+(y), \V_\eta^-(y),\V_\eta^0(y) \ge 0$, has a truncated normal distribution, \ie{}
\[\textstyle
\eta^Ty\mid Ay \le b,\V_\eta^+(y), \V_\eta^-(y),\V_\eta^0(y) \ge 0 \sim TN(\eta^T\mu,\eta^T\Sigma\eta,\V_\eta^+(y),\V_\eta^-(y)).
\]
\end{theorem}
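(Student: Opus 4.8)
The plan is to reduce the statement to the elementary fact that truncating a Gaussian to a \emph{fixed} interval yields a truncated Gaussian, by exhibiting a decomposition of $y$ that separates the target $\eta^Ty$ from everything the conditioning event depends on. First I would write
\[
y = c\,(\eta^Ty) + z, \qquad c = \frac{\Sigma\eta}{\eta^T\Sigma\eta}, \qquad z = y - c\,(\eta^Ty).
\]
Since $y$ is Gaussian, $(z,\eta^Ty)$ is jointly Gaussian, and a one-line computation gives $\cov(z,\eta^Ty) = \Sigma\eta - c\,(\eta^T\Sigma\eta) = 0$, so $z$ and $\eta^Ty$ are \emph{independent}. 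Consequently the conditional law of $\eta^Ty$ given any event measurable with respect to $z$ is its marginal law $\cN(\eta^T\mu,\eta^T\Sigma\eta)$.

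Next I would rewrite the polyhedral constraint in these coordinates. Writing $w = \eta^Ty$, we have $Ay = a\,w + Az$ with $a = A\Sigma\eta/(\eta^T\Sigma\eta)$, so the $j$-th row of $Ay \le b$ reads $a_j w + (Az)_j \le b_j$. The algebraic identity $(Az)_j = (Ay)_j - a_j w$ shows $b_j - (Az)_j = b_j - (Ay)_j + a_j\eta^Ty$, which is exactly the quantity appearing in \eqref{eq:V-plus}, \eqref{eq:V-minus}, \eqref{eq:V-zero}; in particular each of $\V_\eta^+(y),\V_\eta^-(y),\V_\eta^0(y)$ is a function of $z$ \emph{alone}. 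I would then split into three cases. For $a_j > 0$ the row gives an upper bound $w \le \frac{1}{a_j}(b_j - (Az)_j)$, whose tightest form over such $j$ is $w \le \V_\eta^-(y)$; for $a_j < 0$ dividing by the negative $a_j$ flips the inequality and yields a lower bound, tightest form $w \ge \V_\eta^+(y)$; and for $a_j = 0$ the row gives a constraint on $z$ only, namely $\V_\eta^0(y) \ge 0$. Hence $\{Ay \le b\}$ equals $\{\V_\eta^+(y) \le \eta^Ty \le \V_\eta^-(y)\} \cap \{\V_\eta^0(y) \ge 0\}$, with the conventions $\sup\emptyset = -\infty$, $\inf\emptyset = +\infty$ for empty index sets.

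Finally I would assemble the pieces by conditioning. Fixing the values $(\V_\eta^+,\V_\eta^-,\V_\eta^0) = (v^+,v^-,v^0)$ conditions only on $z$, so by the independence established above $w = \eta^Ty$ remains $\cN(\eta^T\mu,\eta^T\Sigma\eta)$; given these values the remaining content of $\{Ay \le b\}$ is the single fixed-interval constraint $v^+ \le w \le v^-$ (the part $\V_\eta^0 \ge 0$ is already determined by the conditioning). Intersecting the Gaussian law of $w$ with this deterministic interval gives precisely $TN(\eta^T\mu,\eta^T\Sigma\eta,v^+,v^-)$, matching the ordering in the stated conclusion. The main obstacle I anticipate is bookkeeping rather than anything deep: keeping the direction of each inequality straight (so that $a_j<0$ produces the lower endpoint $\V_\eta^+$ and $a_j>0$ the upper endpoint $\V_\eta^-$), and making the independence-plus-conditioning step rigorous. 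For the latter I would phrase it through the regular conditional distribution of $w$ given $z$, noting that truncation to a $z$-measurable interval commutes with conditioning on $z$; the degenerate case $\eta^T\Sigma\eta = 0$ and empty index sets should be flagged but are handled by the conventions above.
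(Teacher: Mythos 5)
Your proof is correct and follows essentially the same route as the paper's: your decomposition $y = c\,(\eta^Ty) + z$ with $z$ independent of $\eta^Ty$ is exactly the orthogonalization underlying the paper's conditional-expectation step (subtracting $\Expect[Ay\mid\eta^Ty]$), and the sign-based case analysis, the sup/inf of the row bounds, and the independence-plus-fixed-interval-truncation conclusion coincide with the paper's auxiliary lemma and its use. The only differences are presentational: you state the event \emph{equality} $\{Ay\le b\} = \{\V_\eta^+(y)\le \eta^Ty\le \V_\eta^-(y)\}\cap\{\V_\eta^0(y)\ge 0\}$ explicitly (the paper claims only an implication, though its final conditioning step actually uses the equality), and your endpoint bookkeeping is consistent with the theorem's stated definitions, whereas the paper's own proof silently swaps the $\V_\eta^+/\V_\eta^-$ labels in its underbraces.
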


\begin{proof}
Our proof is similar in essence to the derivation in \cite{lee2013post}. First, we prove an auxiliary result that shows $Ay \le b$ implies $\{V^-(y) \le \eta^Ty\le V^+(y), V^0(y) \ge 0\}$ for some $\V_\eta^+(y),\V_\eta^-(y),\V^0(y)$ that are independent of $\eta^Ty$. 

\begin{lemma}
\label{lem:equivalent}
Let $y\sim\cN(\mu,\Sigma)$. Define $a = \frac{A\Sigma\eta}{\eta^T\Sigma\eta}$ and
\begin{align*}
\V_\eta^+(y) &= \sup_{j:a_j < 0}\frac{1}{a_j}(b_j - (Ay)_j + a_j\eta^Ty) \\
\V_\eta^-(y) &= \inf_{j:a_j > 0}\frac{1}{a_j}(b_j - (Ay)_j + a_j\eta^Ty) \\
\V_\eta^0(y) &= \inf_{j:a_j = 0} b_j - (Ay)_j.
\end{align*}
Then, $\{Ay \le b\}$ implies $\{\V_\eta^-(y) \le \eta^Ty\le \V_\eta^+(y), \V_\eta^0(y) \ge 0\}$. Further $\V_\eta^+(y),\,\V_\eta^-(y),\,\V_\eta^0(y)$ are independent of $\eta^Ty$.
\end{lemma}

\begin{proof}
The linear constraints $Ay \le b$ are equivalent to
\BEQ
Ay - \Expect[Ay\mid\eta^Ty] \le b - \Expect[Ay\mid\eta^Ty].
\label{eq:truncated-normal-1}
\EEQ
Since conditional expectation has the form
\[
\Expect[Ay\mid\eta^Ty] = A\mu + a(\eta^Ty - \eta^T\mu),\,a = \frac{A\Sigma\eta}{\eta^T\Sigma\eta},
\]
\eqref{eq:truncated-normal-1} simplifies to $Ay - b - a\eta^Ty \le -a\eta^Ty$. Rearranging, we obtain
\begin{align*}
\eta^Ty &\ge \frac{1}{a_j}(b_j - (Ay)_j + a_j\eta^Ty) & a_j < 0 \\
\eta^Ty &\le \frac{1}{a_j}(b_j - (Ay)_j + a_j\eta^Ty) & a_j > 0 \\
0 &\le b_j - (Ay)_j + a_j\eta^Ty & a_j = 0.
\end{align*}
We take the sup of the lower bounds and inf of the upper bounds to deduce
\[
\underbrace{\sup_{j:a_j < 0}\frac{1}{a_j}(b_j - (Ay)_j + a_j\eta^Ty)}_{\V_\eta^-(y)} \le \eta^Ty \le \underbrace{\inf_{j:a_j > 0}\frac{1}{a_j}(b_j - (Ay)_j + a_j\eta^Ty)}_{\V_\eta^+(y)}.
\]
Since $y$ is normal, $b_j - (Ay)_j + a_j\eta^Ty,\,j=1,\dots,m$ are independent of $\eta^Ty$. Hence $\V_\eta^+(y),\,\V_\eta^-(y),\,\V_\eta^0(y) \ge 0$ are also independent of $\eta^Ty$.
\end{proof}

To complete the proof of Theorem \ref{thm:truncated-normal}, we must show $\eta^Ty$ given $Ay \le b,\V_\eta^+(y), \V_\eta^-(y),\V_\eta^0(y) \ge 0$ is truncated normal. By Lemma \ref{lem:equivalent}, the conditional law of $\eta^Ty$ is
\begin{align*}
&F(\eta^Ty\mid Ay \le b,\V_\eta^+(y),\V_\eta^-(y),\V_\eta^0(y)\ge 0) \\
&\pc= F(\eta^Ty\mid\V_\eta^-(y) \le \eta^Ty\le\V_\eta^+(y),\V_\eta^+(y), \V_\eta^-(y),\V_\eta^0(y) \ge 0) \\
&\pc= \frac{F(\eta^Ty,\,v^- \le \eta^Ty\le v^+\mid\V_\eta^+(y) = v^+, \V_\eta^-(y) = v^-,\V_\eta^0(y) \ge 0)}{F(v^- \le \eta^Ty\le v^+\mid\V_\eta^+(y) = v^+, \V_\eta^-(y) = v^-,\V_\eta^0(y) \ge 0)},
\end{align*}
Since $\V_\eta^+(y),\,\V_\eta^-(y),\,\V_\eta^0(y)$ are independent of $\eta^Ty$, 
\begin{align*}
&F(\eta^Ty\mid Ay \le b,\V_\eta^+(y) = v^+, \V_\eta^-(y) = v^-,\V_\eta^0(y)\ge 0) \\
&\pc = \frac{F(\eta^Ty,\,v^- \le \eta^Ty\le v^+)}{F(v^- \le \eta\le v^+)},
\end{align*}
which is the CDF of $TN(\eta^T\mu,\eta^T\Sigma\eta,\V_\eta^+(y),\V_\eta^-(y))$
\end{proof}

\subsection{Monotonicity of $F$}
\label{sec:F-monotone}

\begin{lemma}
\label{lem:F-monotone}
Let $F(x,\mu,\sigma^2,a,b)$ be the CDF of a truncated normal random variable. Then $F$ is monotone decreasing in $\mu$.
\end{lemma}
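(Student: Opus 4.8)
The plan is to recognize the claim as the assertion that the truncated-normal location family is stochastically increasing in its mean, and to prove it through a monotone-likelihood-ratio argument. Writing $F$ explicitly as in \eqref{eq:probability-integral-transform}, with $a=v^-$ and $b=v^+$ the lower and upper truncation points, gives
\[
F(x,\mu,\sigma^2,a,b)=\frac{\Phi\!\left(\frac{x-\mu}{\sigma}\right)-\Phi\!\left(\frac{a-\mu}{\sigma}\right)}{\Phi\!\left(\frac{b-\mu}{\sigma}\right)-\Phi\!\left(\frac{a-\mu}{\sigma}\right)},
\]
which is exactly the CDF $F_\mu(x)=\Prob_\mu(Z\le x)$ of $Z\sim TN(\mu,\sigma^2,a,b)$ evaluated at a fixed $x$, now regarded as a function of the mean $\mu$. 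Thus monotone decrease in $\mu$ is equivalent to the stochastic ordering $F_{\mu_2}\le F_{\mu_1}$ whenever $\mu_1<\mu_2$, and the whole lemma reduces to establishing this dominance.

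First I would verify the monotone likelihood ratio (MLR) property in $\mu$. For $\mu_1<\mu_2$ the truncated densities are $f_{\mu_i}(z)\propto \phi\!\left(\frac{z-\mu_i}{\sigma}\right)\mathbf{1}_{[a,b]}(z)$, and the normalizing constants $\Phi\!\left(\frac{b-\mu_i}{\sigma}\right)-\Phi\!\left(\frac{a-\mu_i}{\sigma}\right)$ do not depend on $z$. Hence, for $z\in[a,b]$, the likelihood ratio
\[
\frac{f_{\mu_2}(z)}{f_{\mu_1}(z)}\;\propto\;\exp\!\left(\frac{(\mu_2-\mu_1)}{\sigma^2}\,z\right)
\]
is strictly increasing in $z$ since $\mu_2-\mu_1>0$. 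This single computation is the analytic core of the proof.

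The remaining step is the standard implication MLR $\Rightarrow$ stochastic dominance. With $r(z)=f_{\mu_2}(z)/f_{\mu_1}(z)$ nondecreasing and $\int_a^b\!\left(f_{\mu_2}-f_{\mu_1}\right)=0$, the difference $g=f_{\mu_2}-f_{\mu_1}=f_{\mu_1}(r-1)$ changes sign at most once on $[a,b]$, from nonpositive to nonnegative, at the point where $r$ crosses $1$. Splitting $F_{\mu_2}(x)-F_{\mu_1}(x)=\int_a^x g$ at that crossing point (and using $\int_a^b g=0$ to rewrite the right-hand branch as $-\int_x^b g$) yields $F_{\mu_2}(x)-F_{\mu_1}(x)\le 0$ for every $x$, which is the desired monotonicity. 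I expect this sign-change bookkeeping to be the only delicate part, and it must be stated so as to cover the degenerate limits $a=-\infty$ and/or $b=+\infty$, where $F$ collapses to the untruncated $\Phi\!\left(\frac{x-\mu}{\sigma}\right)$ and monotonicity is immediate. A more computational alternative is to differentiate $F$ in $\mu$ directly; there the obstacle is signing the cross term $(\phi(u)-\phi(\alpha))(\Phi(\beta)-\Phi(\alpha))-(\Phi(u)-\Phi(\alpha))(\phi(\beta)-\phi(\alpha))$ for $\alpha\le u\le\beta$, which is precisely the MLR inequality in disguise, so I would favor the likelihood-ratio route as the cleaner argument.
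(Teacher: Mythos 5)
Your proof is correct and takes essentially the same route as the paper's: both establish that the truncated-normal location family has a monotone likelihood ratio in $\mu$ and then convert MLR into the stochastic ordering $F_{\mu_2}(x)\le F_{\mu_1}(x)$ for $\mu_1<\mu_2$. The only differences are in the bookkeeping --- you verify MLR by explicitly computing the ratio $\exp\!\left(\frac{(\mu_2-\mu_1)z}{\sigma^2}\right)$ where the paper invokes the natural-exponential-family structure, and you deduce the CDF ordering via a single-crossing argument on $f_{\mu_2}-f_{\mu_1}$ where the paper integrates the pairwise likelihood-ratio inequality over $(-\infty,y]\times[y,\infty)$ and cancels cross-terms.
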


\begin{proof}
The truncated normal distribution is a natural exponential family in the mean $\mu$. Thus, its likelihood ratio is monotone in $\mu$, \ie{} for $\mu_1 < \mu_2$ and $y_2 < y_2$,
\[
\frac{f_{\mu_2}(y_1)}{f_{\mu_2}(y_2)} < \frac{f_{\mu_2}(y_2)}{f_{\mu_1}(y_2)}.
\]
This implies $f_{\mu_2}(y_1)f_{\mu_1}(y_2) < f_{\mu_2}(y_2)f_{\mu_1}(y_2)$. We integrate with respect to $y_1$ over $(-\infty,y]$ and with respect to $y_2$ over $[y,\infty)$ to obtain
\[
(1 - F_{\mu_1}(y))F_{\mu_2}(y) < (1 - F_{\mu_2}(y))F_{\mu_1}(y).
\]
We subtract the cross-terms to conclude $F_{\mu_2}(y) < F_{\mu_1}(y)$.
\end{proof}

\bibliographystyle{imsart-nameyear}
\bibliography{censored}

\end{document}